\newtheorem{stat}{Claim}
\newtheorem{obs}{Observation}
\newtheorem{fact}{Fact}
\def\comment#1{}
\newcommand{\apex}{\mathop{\mathrm{apex}}}
\newcommand{\cone}{\mathop{\mathrm{cone}}}
\newcommand{\shell}{\mathop{\mathrm{shell}}}
\newcommand{\inter}{\mathop{\mathrm{in}}}
\newcommand{\out}{\mathop{\mathrm{out}}}
\begin{document}

\title{Drawing Planar Graphs with a Prescribed Inner Face}

\author{
  Tamara Mchedlidze,
  Martin N\"{o}llenburg,
	Ignaz Rutter
}

\institute{
  Karlsruhe Institute of Technology (KIT), Germany
}

\maketitle

\begin{abstract}
  Given a plane graph~$G$ (i.e., a planar graph with a fixed planar
  embedding) and a simple cycle~$C$ in~$G$ whose vertices are mapped
  to a convex polygon, we consider the question
  whether this drawing can be extended to a planar straight-line drawing
  of~$G$.  We characterize when this is possible in terms of simple
  necessary conditions, which we prove to be sufficient.  This also
  leads to a linear-time testing algorithm.  If a drawing extension
  exists, it can be computed in the same running time.
\end{abstract}

\section{Introduction}

The problem of extending a partial drawing of a graph to a complete
one is a fundamental problem in graph drawing that has many
applications, e.g., in dynamic graph drawing and graph interaction.  This problem has been
studied most in the planar setting and often occurs as a subproblem in the construction 
of planar drawings.

The earliest example of such a drawing extension result are so-called Tutte
embeddings.  In his seminal paper ``How to Draw a
Graph''~\cite{t-hdg-63}, Tutte showed that any triconnected planar
graph admits a convex drawing with its outer vertices fixed to an
arbitrary convex polygon.  The strong impact of this fundamental
result is illustrated by its, to date, more than 850 citations and the
fact that it received the ``Best Fundamental Paper Award'' from GD'12.
The work of Tutte has been extended in several ways.  In particular,
it has been strengthened to only require polynomial
area~\cite{dgk-pdhgg-11}, even in the presence of collinear
points~\cite{cegl-dgppo-12}.  Hong and Nagamochi extended the result
to show that triconnected graphs admit convex drawings when their
outer vertices are fixed to a star-shaped polygon~\cite{HongN08}.  For
general subdrawings, the decision problem of whether a planar straight-line
drawing extension exists is NP-hard~\cite{Patrignani06}.  Pach and
Wenger~\cite{PW98} showed that every subdrawing of a planar graph that
fixes only the vertex positions can be extended to a planar drawing
with~$O(n)$ bends per edge and that this bound is tight.  The drawing
extension problem has also been studied in a topological setting,
where edges are represented by non-crossing curves.  In contrast to
the straight-line variant, it can be tested in linear time whether a
drawing extension of a given subdrawing exists~\cite{ADFJKPR10}.
Moreover, there is a characterization via forbidden
substructures~\cite{jkr-ktppg-13}.

In this paper, we study the problem of finding a planar straight-line
drawing extension of a plane graph for which an arbitrary cycle has
been fixed to a convex polygon.  It is easy to see that
a drawing extension does not always exist in this case; see Fig.~\ref{fig:petal}. 
Let $G$ be a plane graph and let $C$ be a simple cycle of $G$ represented by
a convex polygon $\Gamma_C$ in the plane. The following two simple conditions are clearly necessary for the existence of a drawing extension: (i) $C$ has no chords that must be embedded outside of $C$ and  (ii) for every vertex $v$ with neighbors on $C$ that must be embedded outside of $C$ there exists a placement of $v$ outside $\Gamma_C$ such that the drawing of the graph induced by $C$ and $v$ is plane and bounded by the same cycle as in $G$. We show in this paper that these two conditions are in fact sufficient. Both conditions can be tested in linear time, and if they are satisfied, a corresponding drawing extension can be constructed within the same time bound.

Our paper starts with some necessary definitions (Section~\ref{sec:definitions}) and useful combinatorial properties (Section~\ref{sec:combinatorial}). The idea of our main result has two steps. We first show in Section~\ref{sec:extension-one-sided} that the conditions are sufficient if~$\Gamma_C$ is one-sided (i.e., it has an edge whose incident inner angles are both less than~$90^\circ$).  Afterward, we show in Section~\ref{sec:main-theorem} that, for an arbitrary convex polygon $\Gamma_C$, we can place the neighborhood of~$C$ in such a way that
the drawing is planar, and such that the boundary~$C'$ of its outer face is a
one-sided polygon~$\Gamma_{C'}$.  Moreover, our construction ensures
that the remaining graph satisfies the conditions for extendability
of~$\Gamma_{C'}$.  The general result then follows directly from the one-sided case. 

%
%
%

\section{Definitions and a necessary condition}
\label{sec:definitions}

\paragraph{\emph{\bf Plane graphs and subgraphs}} A graph $G = (V,E)$ is \emph{planar} if it has a drawing $\Gamma$ in the plane $\mathbb R^2$
without edge crossings. Drawing $\Gamma$ subdivides the plane into
connected regions called \emph{faces}; the unbounded region is the
\emph{outer} and the other regions are the \emph{inner} faces.
The boundary of a face is called \emph{facial cycle}, and \emph{outer cycle} for the outer face. 
The cyclic ordering of edges around each vertex of $\Gamma$ together
with the description of the external face of $G$  is called
an \emph{embedding} of $G$. A graph $G$ with a planar
embedding is called \emph{plane graph}. A \emph{plane subgraph} $H$ of $G$ 
is a subgraph of $G$ together with a planar embedding that is the restriction of 
the embedding of $G$ to $H$. 

Let $G$ be a plane graph and let $C$ be a simple cycle of $G$. Cycle $C$ is called \emph{strictly internal}, if it does not contain any vertex of the outer face of $G$.
A chord of $C$ is called \emph{outer} if it lies outside $C$ in $G$. A cycle without outer chords is called \emph{outerchordless}. 
The \emph{subgraph of $G$ inside $C$} is the plane subgraph of $G$ that is constituted by vertices and edges of $C$ and all vertices and edges of $G$ that lie inside $C$.

\paragraph{\emph{\bf Connectivity}} A graph $G$ is \emph{$k$-connected}
if removal of any set of $k-1$ vertices of $G$ does not disconnect the graph. For $k=2,3$ a
$k$-connected graph is also called \emph{biconnected} and
\emph{triconnected}, respectively. 
An internally triangulated plane graph is triconnected if and only if there is no edge connecting two non-consecutive vertices of its outer cycle (see, for example,~\cite{Avis96}).

\paragraph{\emph{\bf Star-shaped and one-sided polygons}}
Let $\Pi$ be a polygon in the plane.  Two points inside or on the boundary of $\Pi$ are mutually \emph{visible}, if the straight-line segment
connecting them 
belongs to the interior of $\Pi$. The \emph{kernel} $K(\Pi)$ of polygon
$\Pi$ is the set of all the points inside $\Pi$ from which all vertices of
$\Pi$ are visible. We say that $\Pi$ is \emph{star-shaped} if $K(\Pi)\neq
\emptyset$. We observe that the given definition of a star-shaped polygon ensures that its kernel has a positive area.

A convex polygon $\Pi$ with $k$ vertices is called \emph{one-sided}, if there exists an edge $e$ (i.e., a line segment) of $\Pi$ such that the orthogonal projection to the line supporting $e$  maps all polygon vertices actually onto segment $e$. Then $e$ is called the \emph{base edge} of $\Pi$. Without loss of generality let $e=(v_1,v_k)$ and  $v_1, \dots, v_k$ be the clockwise ordered sequence of vertices of $\Pi$.


\paragraph{\emph{\bf Extension of a drawing}}
Let $G$ be a plane graph and let $H$ be a plane subgraph of 
$G$.  Let $\Gamma_H$ be a planar straight-line drawing of $H$. We say that $\Gamma_H$ is 
\emph{extendable} if drawing $\Gamma_H$ can be 
completed to a planar straight-line drawing $\Gamma_G$ of the plane graph $G$. 
Then $\Gamma_G$ is called an \emph{extension} of $\Gamma_H$. 
A planar straight-line drawing of $G$ is called \emph{convex}, if every face of $G$ (including the outer face) is represented as a convex polygon.

The following theorem by Hong and Nagamochi~\cite{HongN08}
shows the extendability of a prescribed star-shaped outer face of a plane graph.
\begin{theorem}[Hong, Nagamochi~\cite{HongN08}]
\label{theorem:HongNagamochi} Every drawing  of the outer face $f$
of a $3$-connected graph $G$ as a star-shaped polygon can be
extended to a planar drawing of $G$, where each internal face is
represented by a convex polygon. Such a drawing can be computed in
linear time.
\end{theorem}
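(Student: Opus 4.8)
\medskip\noindent\textbf{Proof proposal.}
The natural point of departure is Tutte's barycentric (spring) embedding, which already settles the case of a \emph{convex} prescribed outer polygon, and the task is to adapt it to a star-shaped one. I would first pin the $k$ outer vertices of $G$ to the vertices of the given star-shaped polygon~$\Pi$ and then seek positions for the internal vertices in which each internal vertex lies in the relative interior of the convex hull of its neighbours. Existence and uniqueness of such a placement are standard: it is the solution of a weighted Laplacian system with the outer positions as boundary values, and it exists for \emph{any} choice of positive edge weights. The subtle point --- and the reason a plain uniform-weight Tutte embedding does not suffice --- is that with a non-convex boundary an internal vertex can be dragged into one of the ``reflex pockets'' $\mathrm{conv}(\Pi)\setminus\Pi$, destroying planarity or convexity. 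Hence the core of the argument is to choose the edge weights (equivalently, the conductances of a resistor network) so that all equilibrium positions stay inside~$\Pi$, exploiting that $\Pi$ is star-shaped: morally, every internal vertex should ``see'' all of~$\Pi$, so one biases the weights towards the boundary chains that are visible from the kernel.

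Concretely, I would fix a point $p$ in the (positive-area) kernel $K(\Pi)$ and use the radial fan of~$\Pi$ into the triangles $p\,v_i\,v_{i+1}$ as a scaffold: this certifies that every segment $p\,v_i$ lies inside~$\Pi$ and that $\Pi$ is the union of these convex triangles. The plan is then to prove, by the usual ``empty half-plane'' / local-convexity arguments in the spirit of Tutte, that (i)~the resulting straight-line drawing is crossing-free and (ii)~every internal face is a convex polygon. For (ii) one invokes $3$-connectivity exactly as in Tutte's proof to exclude reflex internal angles, while the containment in~$\Pi$ follows because the chosen weights make each internal vertex a convex combination of ``kernel-visible'' neighbours. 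The step I expect to be the main obstacle is precisely (i)$+$(ii) in the presence of the reflex vertices of~$\Pi$: one must rule out that an internal vertex or an internal face escapes its triangle $p\,v_i\,v_{i+1}$ or ``leaks'' across the boundary near a reflex vertex, which is where the radial scaffold together with a careful monotonicity/sweep argument along the boundary chain is needed.

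For the linear running time I would not solve the Laplacian system directly (nested dissection on planar graphs is near-linear but not linear). Instead I would replace the analytic embedding by an equivalent combinatorial construction --- a canonical-ordering or shift-type incremental placement of the internal vertices that realizes the same qualitative drawing --- or else recurse on a decomposition of~$G$ along paths joining the kernel-visible boundary chains, after internally triangulating~$G$ so that ``convex face'' simply means ``non-degenerate triangle''. Either way each vertex and edge is handled a constant number of times. I would in fact treat the internally triangulated case first, where only planarity and non-degeneracy need checking, and then either reduce the general $3$-connected case to it or observe that the convexity argument above never actually uses triangulation and so applies verbatim.
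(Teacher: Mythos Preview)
This theorem is not proved in the paper at all: it is quoted from Hong and Nagamochi~\cite{HongN08} and used as a black box throughout Sections~\ref{sec:extension-one-sided} and~\ref{sec:main-theorem}. There is therefore no proof in the present paper to compare your proposal against.

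Evaluating your proposal on its own merits, there is a genuine gap at exactly the point you flag as ``the main obstacle''. For \emph{any} choice of positive edge weights, the Tutte equilibrium places each internal vertex in the convex hull of its neighbours, hence in $\mathrm{conv}(\Pi)$; but nothing you have written provides a mechanism that forces these positions into~$\Pi$ itself when~$\Pi$ has reflex vertices. The phrase ``bias the weights towards the boundary chains that are visible from the kernel'' is not a construction, and the radial fan from a kernel point~$p$ is only a decomposition of~$\Pi$, not a constraint that a barycentric solution is compelled to respect. In fact there exist $3$-connected plane graphs and star-shaped outer polygons for which no choice of positive Tutte weights keeps all internal vertices inside~$\Pi$, so the approach cannot succeed as stated. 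Hong and Nagamochi do \emph{not} proceed via a Tutte/Laplacian embedding: their argument is combinatorial, based on decomposing the graph along so-called archfree paths into subinstances with convex (or simpler) outer boundaries, and building the inner-convex drawing recursively from that decomposition.

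Your linear-time discussion is likewise not a proof. Canonical orderings and shift methods are tailored to a triangular or convex outer face; asserting that one can ``replace the analytic embedding by an equivalent combinatorial construction'' for an arbitrary star-shaped boundary is precisely the content of the theorem, not a step one can take for granted.
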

\begin{figure}[tb]
 \centering
 \subfigure[\label{fig:petal}]
 {\includegraphics[scale=0.7]{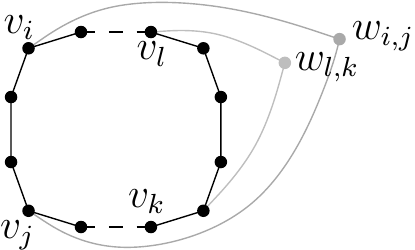}}
 \hspace{+1cm}
 \subfigure[\label{fig:geom_stat}]
 {\includegraphics[scale=0.7]{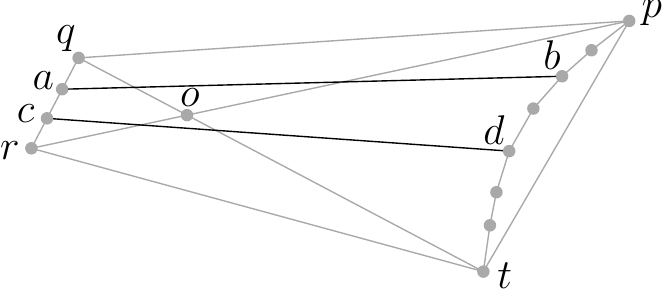}}
 \caption{Convex polygon of cycle $C$ is denoted by black. Vertex $w_{i,j}$ cannot be placed on the plane without changing the embedding or intersecting $C$.  Vertices $w_{i,j}$ and $w_{l,k}$ are petals of $C$, where $w_{l,k} \prec w_{i,j}$. Petal $w_{l,k}$ is realizable, while petal $w_{i,j}$ is not. (b) Illustration of Fact~\ref{stat:no_crossings}.}
\end{figure}

\vspace{-0.3cm}
\paragraph{\emph{\bf Petals and stamens}}
Let $G$ be a plane graph, and let $P_{uv}$ be 
a path in $G$ between vertices $u$ and $v$. Its subpath from vertex $a$ to vertex $b$ is denoted by $P_{uv}[a,b]$. 
Let $C$ be a simple cycle of $G$, and let $v_1,\dots,v_k$ be the vertices of $C$ in clockwise order. Given two vertices $v_i$ and $v_j$ of $C$, 
we denote by $C[v_i,v_j]$ the subpath of $C$ encountered when we traverse $C$ clockwise from $v_i$ to $v_j$. Assume that $C$ is represented by a convex polygon $\Gamma_C$ in the plane.  We say that a vertex $v_i$, $1\leq i \leq k$ of $\Gamma_C$ is \emph{flat}, if $\angle v_{i-1} v_i v_{i+1} = \pi$. Throughout this paper, we assume that convex polygons do not have flat vertices.  

A vertex $w\in V(G)\setminus V(C)$ adjacent to at least two vertices of $C$ and lying outside $C$ in $G$, is called a \emph{petal} of $C$ (see Figure~\ref{fig:petal}).  Consider the plane subgraph $G'$ of $G$ induced by the vertices $V(C) \cup \{w\}$. Vertex $w$ appears on the boundary of $G'$ between two vertices of $C$, i.e. after some $v_i \in V(C)$ and before some $v_j \in V(C)$ in clockwise order. To indicate this fact, we will denote petal $w$ by $w_{i,j}$. Edges $(w_{i,j},v_i)$ and $(w_{i,j},v_j)$ are called the \emph{outer} edges of petal $w_{i,j}$. The subpath $C[v_i,v_j]$ of $C$ is called \emph{base} of the petal $w_{i,j}$. 
A vertex $v_f$ is called \emph{internal}, if it appears on $C$ after $v_i$ and before $v_j$ in clockwise order.
A petal $w_{i,i+1}$ is called \emph{trivial}. 
A vertex of $V(G)\setminus V(C)$ adjacent to exactly one vertex of $C$ is called a \emph{stamen} of $C$. 

Let $v$ be a petal of $C$ and let $u$ be either a petal or a stamen of $C$, we say that $u$ is \emph{nested} in $v$, and denote this fact by $u \prec v$, if 
$u$ lies in the cycle delimited by the base and the outer edges of petal $v$.  
For two stamens $u$ and $v$, neither $u \prec v$ nor $v\prec u$.
So for each pair of stamens or petals $u$ and $v$ we have either $u \prec v$, or $v \prec u$, or none of these.
This relation $\prec$ is a partial order. A petal or a stamen $u$ of $C$ is called \emph{outermost} if it is maximal with respect to $\prec$.  

\paragraph{\emph{\bf Necessary petal condition}} Let again $G$ be a plane graph and let $C$ be an outerchordless cycle of $G$ represented by a convex polygon $\Gamma_C$ in the plane. Let $w_{i,j}$ be a petal of $C$. Let $G'$ be the plane subgraph of $G$, induced by the vertices $V(C)\cup \{w_{i,j}\}$.  We
say that $w_{i,j}$ is \emph{realizable} if there exists a planar drawing 
of $G'$ which is an extension of $\Gamma_C$. 
This gives us the necessary condition that \emph{$\Gamma_C$ is extendable only if each petal of $C$ is realizable.} In the rest of the paper we prove that this condition is sufficient.

\section{Combinatorial Properties of Graphs and Petals}
\label{sec:combinatorial}

In this section, we derive several properties of petals in graphs,
which we  use throughout the construction of the drawing extension
in the remaining parts of this paper. 
Proposition~\ref{prop:triangulation} allows us to restrict our
attention to maximal plane graphs for which the given cycle~$C$ is
strictly internal. The remaining lemmas are concerned with the
structure of the (outermost) petals of~$C$ in such a graph.

\begin{proposition}
  \label{prop:triangulation}
  Let $G$ be a plane graph on $n$ vertices and let $C$ be a simple outerchordless cycle
  of $G$.  There exists a plane supergraph~$G'$ of~$G$
  with~$O(n)$ vertices such that
  \begin{inparaenum}[(i)]
  \item $G'$ is maximal plane,
  \item there are no outer chords of~$C$ in~$G'$,
  \item each petal of~$G'$ with respect to~$C$ is either trivial or
    has the same neighbors on~$C$ as in~$G$, and
  \item $C$ is strictly internal to~$G$.
  \end{inparaenum}
\end{proposition}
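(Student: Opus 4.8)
\emph{Plan.} I would build $G'$ in four stages: (1)~make $G$ $2$-connected by cheap augmentation; (2)~triangulate the interior of $C$; (3)~triangulate the exterior of $C$; and (4)~enclose the whole graph in a fresh triangle so that $C$ becomes strictly internal. Stages (1), (2) and (4) are routine; the only substantial step is (3).

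\emph{Stages 1--2.} For (1) I would add edges to make $G$ connected and then $2$-connected, always choosing both endpoints strictly inside $C$ or strictly outside $C$; when an endpoint is unavoidably a vertex $v_i$ of $C$, I would take the other endpoint to be a neighbour of $v_i$ along $C$, so such an edge can create only trivial petals and never an outer chord of $C$. Since every edge we add from now on lies strictly inside or strictly outside $C$, the two sides can be handled independently. For (2) I would simply triangulate each face of $G$ lying inside $C$ (by diagonals, or by a new vertex joined to the face boundary if that boundary is not a simple cycle); this introduces no outer chord of $C$ and no petal, as everything happens inside $C$. After this, the only non-triangular faces of $G$ lie outside $C$.

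\emph{Stage 3 (the heart).} Here the necessary conditions are restrictive: by~(ii) no two non-consecutive vertices of $C$ may become adjacent outside $C$, and by~(iii) no old vertex of $V(G)\setminus V(C)$ may acquire a new neighbour on $C$ (that would enlarge the $C$-neighbourhood of a petal or stamen); in particular, every \emph{new} vertex may touch at most two vertices of $C$, and then only a consecutive pair. I would triangulate a face $f$ of $G$ outside $C$ as follows. If $\partial f$ contains no edge of $C$, triangulate $f$ freely, exactly as in Stage~2, so that no new vertex touches $C$. Otherwise, decompose $\partial f$ into its maximal subpaths on $C$ (\emph{arcs}) separated by subpaths through vertices not on $C$; by $2$-connectivity and outerchordlessness each separating subpath has an interior vertex, so consecutive arcs are separated by a vertex not on $C$. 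Behind each arc $v_a,\dots,v_b$ I would insert a triangulated ``strip'' of new vertices $u_a,\dots,u_{b-1}$, with $u_c$ joined to $v_c$, $v_{c+1}$ and $u_{c+1}$: this tiles the region between the arc and the path $u_a,\dots,u_{b-1}$ with triangles, makes each $u_c$ a trivial petal, and replaces the arc on the boundary of the remaining region by the $u$-path, so that the only vertices of $C$ still on that boundary are the former arc endpoints, each now flanked by two vertices not on $C$. For each such surviving vertex $v$ of $C$, with current boundary-neighbours $p$ and $q$ (both not on $C$), I would then insert a single new vertex $u'$ joined to $p$, $v$ and $q$, capping $v$ off behind two triangles; $u'$ has exactly one neighbour on $C$, hence is a stamen, not a petal. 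What remains of $f$ then has no vertex of $C$ on its boundary and is triangulated freely. (The degenerate case $\partial f=C$, with nothing outside $C$, is the same, the strip closing up into a triangulated annulus and there being no arc endpoints to cap.)

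\emph{Stage 4 and conclusion.} Finally I would add three new vertices forming a triangle around the whole drawing and triangulate the annulus between them and the current outer boundary, which by construction consists only of vertices not on $C$; thus the outer face of $G'$ is a triangle avoiding $C$, and $C$ is strictly internal. Each stage adds only $O(1)$ new vertices per edge of $C$ or per face of $G$, so $|V(G')|=O(n)$ and, by planarity, $|E(G')|=O(n)$; $G'$ is maximal plane; (ii) holds because no edge between two non-consecutive vertices of $C$ is ever drawn outside $C$; (iii) holds because every new petal is trivial and no old vertex of $V(G)\setminus V(C)$ ever gains a neighbour on $C$; and (iv) holds by Stage~4. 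I expect the one genuinely delicate point to be the claim in Stage~3 that, after the strips are inserted, the vertices of $C$ remaining on a face boundary are isolated there and pairwise non-adjacent -- this is exactly where ``$C$ has no outer chord'' is used, and it is what forces the rigid ``strip $+$ cap'' construction rather than a naive stellation of faces.
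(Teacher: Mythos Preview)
Your approach is essentially the paper's: place a trivial petal on each edge of~$C$ (your ``strip'' is the paper's first step), then cap each remaining incidence of a $C$-vertex with a non-triangular outer face by a fresh degree-$3$ vertex (your ``cap'' is the paper's second step), then triangulate what is left and enclose. The paper does not bother with $2$-connectivity and uses a single new outer vertex rather than three, but these are cosmetic.

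There is, however, one real gap in Stage~3. Your case split hinges on whether $\partial f$ contains an \emph{edge} of~$C$; in the ``no edge'' branch you triangulate $f$ freely. But $\partial f$ may still carry \emph{isolated} vertices of~$C$: any $v\in C$ with at least two outside neighbours sits between two consecutive non-$C$ edges in some outer face, and that face need not contain any edge of~$C$. Triangulating such an $f$ ``by diagonals'' can join two isolated $C$-vertices (an outer chord, violating~(ii)) or join an existing outside vertex to a new $C$-neighbour (enlarging a petal's $C$-neighbourhood, violating~(iii)); stellating it makes the new centre a non-trivial petal. The repair is exactly your own cap operation: before triangulating any outer face freely, cap \emph{every} occurrence of a $C$-vertex on its boundary, not only the endpoints of positive-length arcs. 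Equivalently, split on whether $\partial f$ meets $V(C)$ at all, and treat isolated $C$-vertices as length-zero arcs that receive an empty strip and one cap. The vertex count stays $O(n)$, since the total number of caps is bounded by the number of $C$-to-outside incidences in~$G$. With this correction the argument is complete and coincides with the paper's.
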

\begin{proof}
    First, we arbitrarily triangulate the graph in the interior
    of~$C$.  Clearly, this creates neither outer chords nor new
    petals.  For triangulating the graph outside of~$C$, we have to be
    more careful to avoid the creation of outer chords and potentially
    unrealizable petals.  We proceed in three steps.  First, we ensure
    that, for each edge~$e$ of~$C$, the incident face outside of~$C$
    is a triangle.  In this phase, we create new petals, all of which
    are realizable.  Second, we ensure that all faces incident to
    vertices of~$C$ are triangles without introducing any new petals.
    After this step, all faces incident to vertices of~$C$ are
    triangles.  We then triangulate the remaining faces arbitrarily.
    Afterward, it may still happen that~$C$ is not a strictly
    internal in the resulting graph.  However, in this case there is
    exactly one vertex of~$C$ incident to the outer face.  Then~$C$ is
    made strictly internal by adding a final new vertex in the outer
    face.  In the following, we describe these steps in more detail
    and argue their correctness.

    In the first step, we create, for each edge~$e$ of~$C$, a new
    vertex~$v_e$ that is adjacent to the endpoints of~$e$ and embed it
    in the face incident to~$e$ outside of~$C$.  Note that each such
    vertex~$v_e$ is a trivial petal of size~2, and it is hence
    realizable.  Clearly, after performing this operation for each
    edge~$e$ of~$C$, every face incident to an edge of~$C$ is a
    triangle.  Note that, besides the new petals of size~2, we do not
    create or modify any other petals.  Moreover, we introduce no
    chords of~$C$ since any new edge is incident to a new vertex.

    For the second step, we traverse all inner faces incident to
    vertices of~$C$ that are not triangles.  Let~$f$ be such a face
    that is incident to some vertex~$v$ of~$C$.  Let~$vu$ and~$vw$ be
    the two edges incident to~$v$ that bound~$f$.  Note that~$vu$
    and~$vw$ do not belong to~$C$ since, according to step~1,~$f$
    would be a triangle in this case.  Hence, since there are no
    outer chords of~$C$,~$u$ and~$w$ are not in~$C$.  We create a new
    vertex~$v_f$, connect it to~$u,v$ and~$w$, and embed it in~$f$.
    This reduces the number of non-triangular faces incident to~$v$ by
    1.  Note that this operation neither produces a petal nor a chord
    of~$f$.  After treating all faces incident to vertices of~$f$ in
    this way, all inner faces incident to vertices of~$f$ are
    triangles.
  
    In the third step, we triangulate the remaining faces arbitrarily.
    None of the edges added in this step is incident to a vertex
    of~$C$.  Hence, this produces neither petals nor chords.  Finally,
    if~$C$ is not strictly internal in the resulting graph, there is
    at most one vertex of~$C$ on the outer face.  To see this, observe
    that the outer face is a triangle, and if two of its vertices
    belong to~$C$, the edge between them would be an outer chord,
    contradicting the properties of our construction.  If this case
    occurs, we add an additional vertex into the outer face and
    connect it to all three vertices of the outer face.
    Afterward~$C$ is strictly internal.  Again this can neither
    produce a petal nor a cycle.

    The resulting graph~$G'$ is triangulated and all its petals are
    realizable.  Obviously, the procedure adds only~$O(n)$ vertices.
    This concludes the proof. \qed
  \end{proof}

In the following we assume that our given plane graph is 
maximal, and the given cycle is strictly internal, otherwise 
Proposition~\ref{prop:triangulation} is applied.

\begin{lemma}
  \label{lem:petal-cycle}
  Let~$G$ be a triangulated planar graph with a strictly internal outerchordless
  cycle~$C$.  Then the following statements hold.
  $(i)$ Each vertex of~$C$ that is not internal to an outermost petal is
    adjacent to two outermost petals. $(ii)$ There is a simple cycle~$C'$ whose interior contains~$C$ and
    that contains exactly the outermost stamens and petals of $C$.
\end{lemma}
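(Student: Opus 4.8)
The plan is to prove $(i)$ and $(ii)$ in turn, exploiting the triangulation of~$G$ to control the local structure on the side of~$C$ facing outward. For~$(i)$, I would fix a vertex $v=v_i$ of~$C$ that is not internal to any outermost petal and examine the edges incident to~$v$ outside~$C$, listed in rotation between the two edges $v_{i-1}v_i$ and $v_iv_{i+1}$ of~$C$. Since $G$ is triangulated and $C$ is strictly internal, the face outside~$C$ incident to $v_{i-1}v_i$ is a triangle $v_{i-1}v_iu$, and since $C$ is outerchordless we get $u\notin V(C)$; hence $u$ is a petal whose base contains the edge $v_{i-1}v_i$. Let $W$ be an outermost petal with $u\preceq W$, so the base of~$W$ contains $v_{i-1}$ and~$v_i$. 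If the clockwise endpoint of the base of~$W$ were not~$v_i$, then, as that base is a proper subpath of~$C$ already containing the consecutive vertices $v_{i-1},v_i$, the vertex~$v_i$ would lie strictly inside it and would therefore be internal to the outermost petal~$W$, a contradiction. So the base of~$W$ ends at~$v_i$, making $v_i$ a base endpoint of~$W$ joined to~$W$ by an outer edge. The symmetric argument applied to the triangle outside~$C$ incident to $v_iv_{i+1}$ produces an outermost petal~$W'$ whose base starts at~$v_i$, and $W\neq W'$ because a base ending at~$v_i$ cannot coincide with a base starting at~$v_i$. Hence $v_i$ is adjacent to two distinct outermost petals.

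For~$(ii)$ I would first establish three structural facts and then assemble~$C'$. For a petal~$W$, let $R_W$ be the closed region bounded by the base of~$W$ together with its two outer edges. \emph{(a)}~For distinct outermost petals the regions $R_W$ have pairwise disjoint interiors. \emph{(b)}~Consequently the bases of the outermost petals partition the edge set of~$C$, their shared endpoints being exactly the vertices of~$C$ not internal to any outermost petal; combined with~$(i)$ this yields a cyclic sequence $W_1,\dots,W_m$ of outermost petals such that $W_t$ and~$W_{t+1}$ are both adjacent to a common junction vertex~$v_{i_{t+1}}$. \emph{(c)}~Every outermost stamen is adjacent to a single vertex of~$C$, that vertex is a junction~$v_{i_t}$, and in the rotation around~$v_{i_t}$ it lies strictly between the outer edges to~$W_{t-1}$ and to~$W_t$; this is because a stamen adjacent to a vertex strictly inside a base lies in the corresponding~$R_W$ and hence is not outermost, while a non-petal neighbour of a junction that lies outside both adjacent petal-regions can only touch that one junction on~$C$. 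I expect~\emph{(a)} to be the crux: to prove it I would pass to the plane subgraph $G_{\mathrm{out}}$ of~$G$ lying outside~$C$ (in which $C$ bounds a face), view $R_W$ and $R_{W'}$ as unions of faces of~$G_{\mathrm{out}}$, and, assuming their interiors meet without one containing the other, walk along a path of faces inside one region from a shared face to a face not in the other; the edge where this walk leaves the second region lies on the boundary of that region, but a boundary edge of~$R_{W'}$ is either an edge of~$C$ (whose other incident face is the $C$-face, which lies in no petal-region) or an outer edge of~$W'$, and in the latter case the apex of~$W'$ would lie strictly inside~$R_W$, contradicting the maximality of~$W'$ in~$\prec$.

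Finally I would build~$C'$ junction by junction. Around a junction~$v_{i_t}$ with incident outermost petals $W_{t-1}$ (clockwise-before) and~$W_t$ (clockwise-after) and outermost stamens $s_1,\dots,s_l$ between them in rotation, the triangulation gives the faces $W_{t-1}s_1v_{i_t},\ s_1s_2v_{i_t},\dots,\ s_lW_tv_{i_t}$, hence the path $W_{t-1}\,s_1\cdots s_l\,W_t$ in~$G$ (just the edge $W_{t-1}W_t$ when $l=0$). Concatenating these paths over $v_{i_1},\dots,v_{i_m}$ in cyclic order produces a closed walk visiting precisely the outermost petals and stamens of~$C$; it is a simple cycle~$C'$ because the $W_t$ are pairwise distinct and, by~\emph{(c)}, each outermost stamen appears in exactly one segment. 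By~\emph{(a)}--\emph{(c)} the union of $R_{W_1},\dots,R_{W_m}$ with the triangulated fans at the junctions is a region glued from disks cyclically along single edges, i.e.\ an annulus whose two boundary cycles are exactly~$C$ and~$C'$; since $C$ is strictly internal and this annulus lies outside~$C$ yet meets all of~$C$, the cycle~$C'$ encloses~$C$, which completes~$(ii)$.
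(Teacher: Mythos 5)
Your proof is correct and follows essentially the same route as the paper: part (i) extracts, from the triangles outside the two edges of~$C$ at the given vertex, two outermost petals whose bases must end respectively start there, and part (ii) chains together the adjacencies forced by the triangular faces between consecutive outside-neighbours around~$C$. The only difference is one of detail: your supporting claims (a)--(c) --- disjointness of the outermost petal regions, the resulting partition of the edges of~$C$ into bases, and the localization of outermost stamens at the junction vertices --- are left implicit in the paper's one-paragraph argument for (ii), so your write-up is a more carefully justified version of the same proof.
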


\begin{proof}
  For~(i), observe that each edge of~$C$ is incident to a triangle
  outside of~$C$, whose tip then is a petal.  Thus, every vertex
  of~$C$ is adjacent to at least two petals~$p_1$ and~$p_2$ such that
  none of them contains the other one.  For a vertex~$v$ of~$C$ that
  is not contained in the base of an outermost petal, this implies
  that there exists distinct outermost petals~$p_1'$ and~$p_2'$
  with~$p_1 \prec p_1'$ and~$p_2 \prec p_2'$.  Since~$v$ is not
  internal to any of them, it must be incident to one of their outer
  edges.

  For~(ii), let~$v$ be a vertex of~$C$ and let~$u$ and~$w$ be two
  stamens or petals that are adjacent in the circular ordering of
  neighbors around~$v$.  Then~$vu$ and~$vw$ together bound a face,
  which must be a triangle, implying that~$u$ and~$w$ are adjacent.
  Applying this argument to all adjacent pairs of petals or stamens
  around~$C$ yields the claimed cycle in~$G$ that traverses exactly
  the outermost petals and stamens.
  \qed
\end{proof}

The following lemma is a helper lemma that is only used in the proof
of Lemma~\ref{lem:disjoint-paths}.

\begin{lemma}
  \label{lem:path-in-petal}
  Let~$G$ be a triangulated planar graph with a strictly internal outerchordless
  cycle~$C$.  Let~$u \in C$ be internal to an
  outermost petal~$v$.  Then there exists a chordless path from~$u$
  to~$v$ that contains no other vertices of~$C$.
\end{lemma}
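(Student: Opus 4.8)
The plan is to localise the statement to the \emph{petal region} of $v$ and then prove that its interior is connected. Write $v=w_{i,j}$ with base $C[v_i,v_j]=p_0p_1\cdots p_m$ (so $v_i=p_0$, $v_j=p_m$) and $u=p_t$ with $0<t<m$; thus $m\ge 2$. Since $v$ is a petal, $(v,p_0)$ and $(v,p_m)$ are edges of $G$, so $D:=p_0p_1\cdots p_m\,v$ is a simple cycle of $G$, and the subgraph $R$ of $G$ inside $D$ is an internally triangulated plane disk whose outer boundary is the cycle $D$, with $|D|\ge 4$. Because $G$ is planar, $R$ is an induced subgraph of $G$. The interior vertices $I$ of $R$ are exactly the vertices of $G$ lying strictly outside $C$ inside the petal of $v$; hence $I\cap V(C)=\emptyset$ and $v\notin V(C)$. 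Consequently, any $u$--$v$ path whose internal vertices all lie in $I$ meets $C$ only in $u$, and a shortest such path, taken in the induced subgraph $R[I\cup\{u,v\}]$, is chordless in $G$. So it suffices to connect $u$ to $v$ within $R[I\cup\{u,v\}]$.

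If $u$ and $v$ are adjacent we are done, so assume they are not. First I would check, using that $C$ is outerchordless, that both $u$ and $v$ have a neighbour in $I$. For $u$: in the rotation around $u$ the neighbours strictly between the two base edges at $u$ can be neither base vertices (that edge would be an outer chord of $C$, as $u$ has no other neighbour on $C$) nor $v$ (we assumed $u\not\sim v$), hence they lie in $I$; and there is at least one, since otherwise the only face at $u$ would be a triangle $p_{t-1}\,u\,p_{t+1}$ giving the outer chord $(p_{t-1},p_{t+1})$ --- the sole borderline case $p_{t-1}=v_i$, $p_{t+1}=v_j$ with $|V(C)|=3$ is impossible because the triangle $v_iuv_j$ then bounds the inside of $C$, not $R$. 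For $v$: if $v$ had no neighbour in $I$, all its $R$-neighbours would be base vertices, and any two consecutive in the rotation around $v$ span a triangular face and are therefore consecutive on $C$ (no outer chords); this forces $v$ to be adjacent to every base vertex, in particular to $u$, a contradiction.

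The heart of the argument --- and the step I expect to be the main obstacle --- is to show that $R[I]$ is connected. I would argue topologically. Suppose not, and let $I_1\ne I_2$ be two components of $R[I]$. Let $\Omega_1$ be the union of all closed faces of $R$ incident to a vertex of $I_1$; it is connected, since it contains the full star of each vertex of $I_1$ and $I_1$ is connected. Every vertex on the topological boundary of $\Omega_1$ lies on $D$: a vertex of $I_1$ lies in the interior of $\Omega_1$ (all faces around it contain it), and a vertex of $I\setminus I_1$ lies on no face of $\Omega_1$ (it has no neighbour in $I_1$). Since $R$ is a topological disk whose boundary circle is $D$, a closed connected subregion all of whose boundary lies on $D$ must be the whole disk; hence every face of $R$ is incident to $I_1$. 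By the same reasoning every face of $R$ is incident to $I_2$, so some triangular face has one vertex in $I_1$ and another in $I_2$ --- but those two vertices are adjacent, contradicting that $I_1$ and $I_2$ are distinct components of $R[I]$. Hence $R[I]$ is connected, and together with the previous paragraph this yields a walk, and so a shortest (chordless) path, from $u$ to $v$ through $I$, which proves the lemma. (Note that the hypothesis that $v$ is outermost is in fact not needed for this argument.)
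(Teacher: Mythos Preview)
Your approach diverges from the paper's and contains a genuine gap in the connectedness argument for $R[I]$. The cycle $D=p_0p_1\cdots p_m\,v$ need not be chordless: outerchordlessness of~$C$ forbids edges between two base vertices, but nothing rules out edges $vp_i$ with $0<i<m$. Such a chord of~$D$ can separate~$I$. Concretely, take $m=4$, let $v$ be adjacent to $p_0,p_2,p_4$, and place one interior vertex $a$ adjacent to $v,p_0,p_1,p_2$ and another interior vertex $b$ adjacent to $v,p_2,p_3,p_4$. This is an internally triangulated disk with no outer chord of~$C$, yet $I=\{a,b\}$ and $R[I]$ is disconnected. With $I_1=\{a\}$, the region $\Omega_1$ has the edge $vp_2$ on its topological boundary; its \emph{endpoints} $v$ and $p_2$ lie on~$D$, but the edge itself does not, so the step ``a closed connected subregion all of whose boundary lies on~$D$ must be the whole disk'' does not apply. (The lemma still holds here, e.g.\ via $p_1\,a\,v$, just not through connectedness of $R[I]$.) Note also that your passing claim that $R$ is an induced subgraph of~$G$ is not quite right---inner chords of~$C$ between base vertices may exist---though this does not affect the chordlessness conclusion.

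The paper handles exactly this obstruction by first shrinking the region: it lets $v_1,v_2$ be the nearest base vertices on either side of~$u$ that are adjacent to~$v$, and works inside the cycle $C'$ formed by $C[v_1,v_2]$ together with~$v$. By the choice of $v_1,v_2$ there is no edge from~$v$ to an interior vertex of $C[v_1,v_2]$, so $C'$ is chordless and the graph inside it is triconnected; Menger's theorem then yields three disjoint $u$--$v$ paths, and the middle one is forced off the boundary, hence off~$C$. If you perform the same restriction to a chordless subregion first, your topological argument does go through, since then every boundary edge of $\Omega_1$ must actually lie on the bounding cycle.
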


\begin{proof}
  If~$u$ is adjacent to~$v$, there is nothing to prove.  Otherwise,
  let~$v_1$ and~$v_2$ be the first vertex on~$C$ to the left and right
  of~$u$, respectively, that are adjacent to~$v$.  Let~$C'$ denote the
  cycle consisting of~$C[v_1,v_2]$ and~$v$ together with two
  edges~$vv_1$ and~$vv_2$.  The graph consisting of~$C'$ and all edges
  and vertices embedded inside~$C'$ is inner-triangulated and
  chordless, and hence triconnected.  Thus, in this graph, there exist
  three vertex disjoint paths from~$u$ to~$v$.  The middle one of
  these paths cannot contain any vertices of~$C$.  We obtain the
  claimed path by removing transitive edges from the middle path.
  \qed
\end{proof}

\begin{lemma}
  \label{lem:disjoint-paths}
  Let~$G$ be a maximal planar graph with a strictly internal outerchordless
  cycle~$C$.  Let~$u$ and~$v$ be two adjacent
  vertices on~$C$ that are not internal to the same petal.
	Then there exists a third vertex~$w$ of~$C$ such that there exist three
  chordless disjoint paths from~$u,v$ and~$w$ to the vertices of the
  outer face of~$G$ such that none of them contains other vertices
  of~$C$.
\end{lemma}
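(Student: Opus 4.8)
The plan is to work inside the part of $G$ outside $C$ and to build each of the three required paths in two stages: first a ``clean'' initial segment leaving $u$ and leaving $v$ that escapes the petals immediately enclosing the edge $uv$, and then a routing stage that carries everything to the outer triangle.

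First I would set up notation and reduce the statement. Let $xyz$ be the outer face of $G$, a triangle since $G$ is maximal plane, and let $H$ be the plane subgraph of $G$ consisting of $C$ together with all vertices and edges outside $C$. Since $C$ is strictly internal, $H$ is a near-triangulation whose only non-triangular face is the inner face bounded by $C$; that face is chordless because $C$ is outerchordless, and the outer face of $H$ is $xyz$. It suffices to find $w\in V(C)\setminus\{u,v\}$ and three pairwise vertex-disjoint paths in $H$ from $u,v,w$ to $x,y,z$ each meeting $C$ only in its first vertex: shortcutting every such path along its chords then yields chordless paths with the same endpoints that are still pairwise disjoint and still touch $C$ only at $u,v,w$. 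Now examine the edge $uv$: the face of $H$ on the far side of $uv$ from the $C$-face is a triangle $uvp$, and $p$, lying outside $C$ and adjacent to $u$ and $v$, is a petal. Let $P$ be the outermost petal with $p\preceq P$ (so $P=p$ when $p$ is already outermost). Since $p$ lies in the region of $P$ bounded by its base and its two outer edges, an edge from $p$ to $u$ or to $v$ cannot cross the boundary of that region, whence both $u$ and $v$ lie on the base of $P$. Because $u$ and $v$ are consecutive on $C$ and, by hypothesis, are \emph{not} internal to one common petal, one of them is an endpoint of the base of $P$; reflecting if needed, we may assume this endpoint is $u$, which is then the clockwise-first base vertex of $P$.

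Next I would produce the three clean initial segments. Since $v$ lies on the base of the outermost petal $P$, Lemma~\ref{lem:path-in-petal} (or the single edge $vP$ if $v$ happens to be a base endpoint of $P$) gives a chordless path $\pi_v$ from $v$ to $P$ staying inside the region of $P$ and using no vertex of $C$ other than $v$. On the other side of $u$, Lemma~\ref{lem:petal-cycle}(i) tells us $u$ is either internal to an outermost petal or adjacent to two outermost petals, so there is an outermost petal $Q\ne P$ flanking $u$ on that side together with a chordless path $\pi_u$ from $u$ to $Q$ avoiding $V(C)\setminus\{u\}$ (Lemma~\ref{lem:path-in-petal} in the former case, an edge in the latter). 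The regions of the distinct outermost petals $Q$ and $P$ share only the vertex $u$, so $\pi_u$ and $\pi_v$ can be taken disjoint. Let $C'$ be the simple cycle through exactly the outermost petals and stamens of $C$ given by Lemma~\ref{lem:petal-cycle}(ii); its interior contains $C$, we have $Q,P\in V(C')$, and $\pi_u,\pi_v$ meet $C'$ only in $Q$, respectively $P$. Finally, since every edge of $C$ lies on the base of some outermost petal appearing on $C'$ and $|V(C)|\ge 3$, one can locate a vertex $R\in V(C')\setminus\{Q,P\}$ attached to $C$ at some $w\notin\{u,v\}$, hence a short chordless path $\pi_w$ from $w$ to $R$ (an edge or a Lemma~\ref{lem:path-in-petal}-path) avoiding $V(C)\setminus\{w\}$ and disjoint from $\pi_u\cup\pi_v$.

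It then remains to join the three distinct vertices $Q,P,R$ of $C'$ to $x,y,z$ by three pairwise disjoint paths inside the subgraph $H'$ of $H$ consisting of $C'$ and everything outside $C'$; since $H'$ contains no vertex of $C$, concatenating these with $\pi_u,\pi_v,\pi_w$ yields the three paths we want. When $C'$ has no chord in $H'$ this final step is routine: insert a vertex $\nu$ inside the $C'$-face of $H'$ adjacent only to $Q,P,R$ and a vertex $\tau$ outside $xyz$ adjacent to $x,y,z$, obtaining a $3$-connected plane graph in which $\nu$ and $\tau$ are non-adjacent of degree $3$; Menger's theorem then gives three internally disjoint $\nu$--$\tau$ paths, which necessarily leave $\nu$ along $\nu Q,\nu P,\nu R$ and enter $\tau$ through $x,y,z$, and deleting $\nu$ and $\tau$ leaves the desired paths. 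I expect the genuine difficulty to be exactly the case where $C'$ \emph{does} have chords in $H'$ (unlike $C$, it may): such a chord cuts a lune off the annulus between $C'$ and $xyz$, and if that lune carried $Q$, $P$ or $R$ on its boundary one could not simply discard it. One would handle this by arguing -- using that $G$ is a triangulation and $C$ is strictly internal, so that every vertex of $C'$ still reaches $xyz$ after deleting all of $C$, and that the only $1$- and $2$-separators of such graphs come from petal-like structures -- that after re-choosing $R$ (and, if necessary, re-routing $\pi_u$ or $\pi_v$ on the far side of their petals) the three $C'$-endpoints can be positioned so that the peeled graph is chordless, at which point the argument above applies; equivalently, one can iterate the construction of the previous paragraph, peeling off one layer of outermost petals at a time while maintaining three disjoint clean paths from $u,v,w$ out to the current enclosing cycle until that cycle is $xyz$. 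Making this bookkeeping precise -- and in particular verifying that the iteration never gets stuck, which is exactly where the assumption that $u$ and $v$ are not internal to a common petal is essential, since it is what produces the two disjoint initial exits $\pi_u,\pi_v$ on opposite sides of $uv$ that the whole argument relies on -- is the crux of the proof.
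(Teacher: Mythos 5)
Your setup---working in the subgraph outside $C$, identifying the petal $P$ on the far side of the edge $uv$, using the hypothesis that $u$ and $v$ are not internal to a common petal to obtain two distinct outermost petals flanking them, and building clean initial segments via Lemma~\ref{lem:path-in-petal} to the shell cycle---matches the paper's framework. But the final step, which you yourself flag as ``the crux,'' is genuinely missing, and it is the heart of the lemma. Having fixed three specific shell vertices $Q$, $P$, $R$ in advance, you then need three disjoint paths from \emph{exactly these} vertices to the outer face; the Menger argument (via your added apex vertices $\nu$ and $\tau$) only guarantees three disjoint paths starting at \emph{some} triple of vertices of $C'$, and your proposed repairs (re-choosing $R$, re-routing $\pi_u$ or $\pi_v$, or iterating a layer-peeling) are left as sketches. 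The iteration in particular is not obviously sound: after peeling one layer, the enclosing cycle $C'$ can have chords in the outer part, so the invariants you exploit for $C$ (outerchordlessness, the clean petal structure of Lemmas~\ref{lem:petal-cycle} and~\ref{lem:path-in-petal}) do not automatically transfer, and it is not verified that the process terminates with the three paths still disjoint.

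The paper closes exactly this gap by reversing the order of the two stages. It first applies Menger to obtain a set $A=\{a,b,c\}$ of three vertices on the shell cycle $C_{\shell}$ that already possess disjoint paths to the outer face meeting $C_{\shell}$ only at their starting points. It then routes from $u$, $v$ and a suitably chosen $w$ \emph{to these prescribed vertices}, using $C_{\shell}$ itself as a highway: two of the connectors run from the outermost petals $x$ and $y$ (reached from $u$ and $v$ by Lemma~\ref{lem:path-in-petal} or a single edge) around $C_{\shell}$ in opposite directions to $a$ and $c$, avoiding the middle Menger vertex $b$, while the third path reaches $b$ directly by one edge ($b$ is a petal or stamen of $C$, hence adjacent to some vertex of $C$, and that adjacency is what dictates the choice of $w$ in each case). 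The case distinction (two versus at least three outermost petals; $b$ on the clockwise versus counterclockwise arc of $C_{\shell}$ between $x$ and $y$) is precisely the bookkeeping needed to keep the three connectors disjoint. Without this idea---or a fully worked-out substitute---your argument does not yet prove the lemma.
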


\begin{proof}
  Let~$C_{\shell}$ denote the cycle going through the petals and
  stamens of~$C$, which exists by Lemma~\ref{lem:petal-cycle}.  It
  follows from Menger's theorem that there exists a set~$A$ of three
  vertices on~$C_{\shell}$ that have disjoint paths to the outer face,
  such that none of them contains any other vertices of~$C_{\shell}$.
  Our goal is to connect the vertices~$u,v,w$ by vertex-disjoint paths
  to these vertices such that together we obtain the claimed paths.
  Assume without loss of generality that~$v$ is the immediate
  successor of~$u$ on~$C_{\shell}$ in counterclockwise order.

  We distinguish cases based on the number of outermost petals.

  \textbf{Case 1:} There are exactly two outermost petals~$x$ and~$y$.
  Note that one of~$u$ and~$v$ must be adjacent to both.  Without loss
  of generality, we assume that~$v$ is.  The case that~$u$ is adjacent
  to both of them is completely symmetric.

  We consider the subpaths of~$C_{\shell}$ from~$x$ to~$y$.  At least
  one of them contains a vertex of~$A$ in its interior.  If one of
  them contains all three of them in its interior, let~$b$ denote the
  middle one, otherwise, let~$b$ denote an arbitrary one.  Let~$a$
  and~$c$ denote the vertices of~$A \setminus \{b\}$ that occur before
  and after~$b$ in counterclockwise direction along~$C_{\shell}$.

  We distinguish cases based on whether~$b$ is contained in the
  clockwise or in the counterclockwise path from~$x$ to~$y$.

  \textbf{Case 1a:} Vertex~$b$ is contained in the counterclockwise
  path from~$x$ to~$y$.  It then follows that~$b$ is a stamen
  of~$v$. We choose~$w$ as an arbitrary vertex on the base of
  petal~$y$ such that it is distinct from~$u$ and~$v$ (note that this
  is always possible since~$C$ has length at least~3).

  Now the claimed paths can be constructed as follows.  For~$u$ take
  the path from~$u$ to~$x$ (Lemma~\ref{lem:petal-cycle}), and from
  there along~$C_{\shell}$ to~$a$, avoiding~$b$.  For~$v$, take the
  path~$vb$.  For~$w$ take the path from~$w$ to~$y$
  (Lemma~\ref{lem:petal-cycle}), and from there along~$C_{\shell}$
  to~$c$, avoiding~$b$.  It follows from the choice of~$b$ that these
  paths are disjoint.

  \textbf{Case 1b:} Vertex~$b$ is contained in the clockwise path
  from~$x$ to~$y$.  Then~$b$ must be a stamen that is, by
  construction, not adjacent to~$v$.

  If~$b$ is not adjacent to~$u$, choose~$b$'s unique neighbor on~$C$
  as~$w$.  The paths are constructed as follows.  For~$u$, take the
  path from~$u$ to~$x$ and from there to~$c$, avoiding~$b$.  For~$v$,
  take the path from~$v$ to~$y$ and from there to~$a$, avoiding~$b$.
  For $w$ take the edge~$wb$.  It follows from the choice of~$b$ that
  the constructed paths are vertex disjoint.

  Otherwise~$b$ is adjacent to~$u$.  Then choose~$w$ as a vertex
  adjacent to~$y$ but distinct from~$u$ and~$v$ (such a vertex exists
  since~$C$ has length at least~3).  The paths are constructed as
  follows.  For~$u$, we take the edge~$ub$.  For~$v$, we take the path
  from~$v$ to~$x$ (Lemma~\ref{lem:path-in-petal}) and from there
  to~$c$, avoiding~$b$.  For~$w$, we take the path from~$w$ to~$y$
  (Lemma~\ref{lem:path-in-petal}) and from there to~$a$, avoiding~$b$.
  Again, it follows from the choice of~$b$ that the constructed paths
  are vertex disjoint.

  \textbf{Case 2:} There are at least three petals of~$C$.  Let~$x$
  denote an outermost petal adjacent to~$u$ such that an outer edge
  of~$x$ is incident to~$v$.  Let~$y \ne x$ be an outermost petal
  adjacent to~$v$.  One of the two subpaths of~$C_{\shell}$
  connecting~$x$ and $y$ contains in its interior a vertex of~$A$.  If
  there are three vertices in the interior of one of these paths,
  let~$b$ denote the middle one.  The vertices of~$A$ before and after
  it in counterclockwise direction along~$C_{\shell}$ are~$a$ and~$c$,
  respectively.

  We distinguish cases based on whether~$b$ is contained in the
  clockwise or in the counterclockwise path from~$x$ to~$y$
  along~$C_{\shell}$.

  \textbf{Case 2a:} Vertex~$b$ is in the counterclockwise path
  from~$x$ to~$y$ along~$C_{\shell}$.

  Then,~$b$ is a stamen adjacent to~$v$ by the choice of~$x$.  Let~$w$
  be any vertex adjacent to~$y$ but distinct from~$v$.  Note that~$w
  \ne u$ by the assumption that there are at least three petals.  We
  take construct the following paths.  For~$u$, we apply
  Lemma~\ref{lem:path-in-petal} to construct a path from~$u$ to~$x$
  and then follow~$C_{\shell}$ to~$a$, avoiding~$b$.  For~$v$, we take
  the edge~$vb$.  Finally, for~$w$, we take the edge~$wy$ and then
  traverse~$C_{\shell}$ to~$c$, avoiding~$b$.  Note that the paths are
  disjoint by the choice of~$b$.

  \textbf{Case 2b:} Vertex~$b$ is in the clockwise path from~$x$
  to~$y$ along~$C_{\shell}$.  Then,~$b$ is a petal or stamen but, by
  definition, not adjacent to~$v$.  If~$b$ is adjacent to a vertex
  that is distinct from~$u$ as well, then we choose this vertex
  as~$w$.  The paths are then~$wb$, and we use the path from~$u$
  to~$x$ to~$a$ (along $C_{\shell}$) and the path from~$v$ to~$y$
  to~$b$ as above.

  If, however~$b$ is a stamen of~$v$, then pick~$w$ as a vertex
  adjacent to~$y$ but distinct from~$u$ and~$v$ (such a vertex exists
  since petal~$y$ has at least two neighbors on~$C$, and since there
  are at least three petals).  Then the paths are~$ub$, the path
  from~$v$ to~$x$ to~$c$ and from~$w$ to~$y$ to~$a$.  As above the
  paths can be chosen such that they are disjoint.

  In all cases, by taking the transitive reduction of the constructed
  paths, we obtain the claimed chordless paths.
  \qed
\end{proof}

\section{Extension of a one-sided polygon}
\label{sec:extension-one-sided}

Let $G$ be a plane graph, and let $C$ be a simple outerchordless cycle, represented by a one-sided polygon $\Gamma_C$. In this section, we show that if each petal of $C$ is realizable, then $\Gamma_C$ 
is extendable to a straight-line drawing of $G$. This result serves as a tool for the general case, which is shown in Section~\ref{sec:main-theorem}.

The drawing extension we produce preserves the outer face, i.e., if the extension exists, then it has the same outer
face as $G$. It is worth mentioning that, if we are allowed to change the outer face, the proof becomes rather simple, as the following claim shows.

\begin{remark} 
\label{theor:one-sided-nochange} Let $G$ be a maximal plane graph and let $C$ be an outerchordless cycle of $G$, represented in the plane by a one-sided polygon $\Gamma_C$. Then drawing $\Gamma_C$ is extendable.
\end{remark}
\begin{proof}
Let $(v_1,v_k)$ be the base edge of $\Gamma_C$.
Edge $(v_1,v_k)$ is incident to two faces of $G$, to a face $f_{\inter}$ inside $C$ and to a face $f_{\out}$ outside $C$. 
We select $f_{\out}$ as the
outer face of $G$. With this choice, edge
$(v_1,v_k)$ is on the outer face of $G$. Let $v$ be the third vertex of this face. 
We place the vertex $v$ far enough from $\Gamma_C$, so that all vertices of $\Gamma_C$ are visible from $v$. Thus, we obtain a planar straight-line drawing of the subgraph $G_v$ induced by the
vertices $V(C) \cup \{v\}$ such that each face is represented by a star-shaped polygon. Each subgraph of $G$ inside a face of $G_v$ is triconnected, and therefore, we can complete the existing drawing to a straight-line planar drawing of $G$, by Theorem~\ref{theorem:HongNagamochi}. \qed
\end{proof}

In the rest of the section we show that extendability of $\Gamma_C$ can be efficiently tested, even if the outer face of $G$ has to be preserved. The following simple geometric fact will be used in the proof of the result of this section (see Figure~\ref{fig:geom_stat} for the illustration). 
\begin{fact}
\label{stat:no_crossings}
Let $pqrt$ be a convex quadrilateral and let $o$ be the intersection of its diagonals. Let $S_{pt}$ be a one-sided convex polygon with base $\overline{pt}$, that lies inside triangle $\triangle opt$. Let $\overline{ab}$ and $\overline{cd}$ be such that $b,d \in S_{pt}$, ordered clockwise as $t,d,b,p$ and $a,c \in \overline{qr}$, ordered as $q,a,c,r$. Then, neither  $\overline{ab}$ and $\overline{cd}$  intersect each other, nor do they intersect a segment between two consecutive points of $S_{pt}$. 
\end{fact}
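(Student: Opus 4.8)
The plan is as follows. Write $\ell_1$ for the line through $p$ and $r$ and $\ell_2$ for the line through $q$ and $t$; these are the two diagonals of the convex quadrilateral $pqrt$, so $o=\ell_1\cap\ell_2$ is interior to $pqrt$, the diagonal $pr$ strictly separates $q$ from $t$, and the diagonal $qt$ strictly separates $p$ from $r$. The triangle $\triangle opt$ has two vertices ($o$ and $p$) on $\ell_1$ and two ($o$ and $t$) on $\ell_2$; hence $\triangle opt$, and therefore $S_{pt}\subseteq\triangle opt$ and in particular $b$ and $d$, lies in the closed half-plane bounded by $\ell_1$ that contains $t$ and in the closed half-plane bounded by $\ell_2$ that contains $p$. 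Dually, the relative interior of the edge $\overline{pt}$'s opposite side $\overline{qr}$ lies strictly in the half-plane of $\ell_1$ containing $q$ and strictly in the half-plane of $\ell_2$ containing $r$, so $a$ and $c$ lie strictly in those two half-planes; in particular $a,c$ are strictly outside $S_{pt}$ and on the same side of the line $pt$ as $o$. (Throughout I assume general position --- $a,c\notin\{q,r\}$, no three of the relevant points collinear, etc. --- and treat degenerate configurations as limiting cases.)

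First I would show that $\overline{ab}$ meets $S_{pt}$ only in $b$ (the argument for $\overline{cd}$ and $d$ being symmetric). Since $S_{pt}$ is convex and $b\in\partial S_{pt}$, it suffices to exhibit a supporting line of $S_{pt}$ through $b$ with $a$ strictly on its outer side: then $\overline{ab}\setminus\{b\}$ lies strictly on the outer side of that line while $S_{pt}$ lies in its closed inner side, forcing $\overline{ab}\cap S_{pt}=\{b\}$. By the first paragraph $\overline{ab}$ stays strictly on the $o$-side of the line $pt$, so it never crosses the base edge $\overline{pt}$ of $\triangle opt$ and hence enters $\triangle opt$ across $\overline{op}$ or across $\overline{ot}$; let $e$ be this entry point, so $\overline{ab}\cap\triangle opt=\overline{eb}$. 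If $b$ happens to lie on $\overline{op}$ (resp.\ $\overline{ot}$), take $\ell_1$ (resp.\ $\ell_2$) as the supporting line, since $S_{pt}\subseteq\triangle opt$ lies on its inner side while $a$ is strictly on the outer side by the first paragraph. Otherwise $b$ lies strictly below its two incident chain-edges, and one uses the supporting line of $S_{pt}$ along whichever chain-edge at $b$ faces toward $e$; that $a$ lies on its outer side follows from the facts that the whole chain of $S_{pt}$ (the part of $\partial S_{pt}$ other than the base) lies on the $\triangle opt$-side of both lines $op$ and $ot$, that $e\in\overline{op}\cup\overline{ot}$, and that $a$ lies strictly beyond both of those lines, which together rule out $a$ lying in the inward cone of $S_{pt}$ at $b$. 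This is the step I expect to require the most care. It also yields the first assertion of the claim: $\overline{ab}$ (and likewise $\overline{cd}$) crosses no edge of $S_{pt}$, and more generally crosses none of the sub-edges obtained by inserting $b$ and $d$ as subdivision vertices, touching $\partial S_{pt}$ only in its own chain-endpoint.

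For the second assertion --- that $\overline{ab}$ and $\overline{cd}$ are disjoint --- I would argue topologically. By the previous step both segments avoid $\operatorname{int}(S_{pt})$; since both are also contained in the convex polygon $pqrt$, they both lie in the region $W:=pqrt\setminus\operatorname{int}(S_{pt})$. Because $S_{pt}$ touches $\partial(pqrt)$ exactly along its base edge $\overline{pt}$, removing $\operatorname{int}(S_{pt})$ merely replaces that edge by the chain of $S_{pt}$; thus $W$ is simply connected, and $\partial W$ consists of the three edges $\overline{pq},\overline{qr},\overline{rt}$ followed by the chain of $S_{pt}$ from $t$ back to $p$. Traversing $\partial W$ once, the four points appear in the cyclic order $a,c,d,b$ --- this is exactly what the orderings $q,a,c,r$ on $\overline{qr}$ and $t,d,b,p$ on the chain encode, and it says precisely that the pair $\{a,b\}$ does not interleave with the pair $\{c,d\}$ on $\partial W$. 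Now $\overline{ab}$ is a chord of the simply connected region $W$ (its relative interior lies in $\operatorname{int}(W)$ by the previous step), splitting $W$ into two pieces, and $c,d$ lie on the same boundary arc cut off by $\overline{ab}$; hence any path inside $W$ from $c$ to $d$ meets $\overline{ab}$ an even number of times, so the straight segment $\overline{cd}$, which can meet the straight segment $\overline{ab}$ at most once, does not meet it at all. Together with the previous paragraph this establishes the claim.
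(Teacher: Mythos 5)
The paper offers no proof of this Fact at all---it is asserted as a ``simple geometric fact'' with only a figure---so there is nothing to compare your route against; your proposal has to stand on its own. Its overall architecture is sound, and the second half (both segments live in the simply connected region $W=pqrt\setminus\mathrm{int}(S_{pt})$, the endpoints $a,c,d,b$ appear in non-interleaved cyclic order on $\partial W$, hence the two chords cannot cross) is complete and correct. The problem is in the first half, at exactly the step you flag as needing the most care: you assert that ``the chain lies on the $\triangle opt$-side of lines $op$ and $ot$, $e\in\overline{op}\cup\overline{ot}$, and $a$ lies strictly beyond both lines'' together rule out $a$ lying in the inward (tangent) cone of $S_{pt}$ at $b$. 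That implication is not a formality---it is the entire geometric content of the Fact. The inward cone at $b$ is bounded by the two \emph{chain-edge} lines at $b$, not by the lines $op$ and $ot$, and knowing that $a$ is beyond the two diagonals says nothing about the chain-edge lines until you relate the two families of lines. Your three cited facts are true but do not combine, as stated, into the conclusion; the choice of edge ``facing toward $e$'' is also a red herring, since (as the correct argument shows) \emph{both} edge-lines at $b$ work and the entry point $e$ plays no role.

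The missing lemma is: for every chain edge $f$ of $S_{pt}$, the supporting line $L_f$ has $q$ and $r$ (hence all of $\overline{qr}$) strictly on its non-$S_{pt}$ side. To see this, note that $p,t\in S_{pt}$ lie on the $S_{pt}$-side of $L_f$, while $L_f$ meets the closed triangle $\triangle opt$ in a chord containing $f$; since that chord cannot cross the base $\overline{pt}$ (both endpoints of the base are on one side), it must separate $o$ from $\{p,t\}$, so $o$ lies weakly on the non-$S_{pt}$ side of $L_f$. Writing $q=t+s(o-t)$ and $r=p+s'(o-p)$ with $s,s'>1$ and letting $\phi$ be an affine functional vanishing on $L_f$ and nonnegative on the non-$S_{pt}$ side, one gets $\phi(q)=(1-s)\phi(t)+s\,\phi(o)\ge 0$ (strict in general position), and likewise for $r$; convexity then gives it for every $a\in\overline{qr}$. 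Applying this to either chain edge incident to $b$ produces the supporting line you need, and the rest of your argument goes through. So the approach is salvageable, but as written the pivotal visibility claim is asserted rather than proved, and the justification you give for it does not actually yield it.
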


We are now ready to prove the main result of this section.
\begin{theorem}
\label{theor:one-sided} Let $G$ be a plane graph and $C$ be a simple outerchordless cycle of $G$, represented in the plane by a one-sided polygon $\Gamma_C$.
If every petal of $C$ is realizable, then $\Gamma_C$ is extendable.
\end{theorem}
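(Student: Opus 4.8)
The plan is to reduce to the setting of Lemma~\ref{lem:disjoint-paths} via Proposition~\ref{prop:triangulation}, and then to build the drawing incrementally from the base edge outward, placing outermost petals and stamens in a thin region near~$C$ so that the outer boundary stays one-sided; then recurse. Concretely, by Proposition~\ref{prop:triangulation} we may assume $G$ is maximal plane, $C$ is strictly internal and outerchordless, and every non-trivial petal of $C$ has the same neighbors on $C$ as before; realizability of petals is preserved since the construction only adds realizable (size-2) petals. Let $e=(v_1,v_k)$ be the base edge of $\Gamma_C$. The key geometric idea is that, because $\Gamma_C$ is one-sided with base $e$, there is a long thin ``slab'' immediately outside $e$ and below the rest of $\Gamma_C$ into which we can push the outer layer of vertices.

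Next I would analyze the outer layer. By Lemma~\ref{lem:petal-cycle}(ii), the outermost petals and stamens of $C$ form a cycle $C_{\shell}$ whose interior contains $C$; by part~(i), every vertex of $C$ not internal to an outermost petal is incident to an outer edge of one. I would place the vertices of $C_{\shell}$ one by one, walking clockwise around $C$ starting just after $v_1$ and ending just before $v_k$: each outermost petal $w_{i,j}$ is placed close to its base $C[v_i,v_j]$, low enough that its two outer edges, together with $C[v_i,v_j]$, bound a thin star-shaped (indeed almost one-sided) region, and chosen so that $w_{i,j}$ lies below the line through $v_1 v_k$ extended — this is where Fact~\ref{stat:no_crossings} is invoked to guarantee that the outer edges of consecutive petals (and the edges of $C_{\shell}$ between them) do not cross each other or $\Gamma_C$. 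Trivial petals sitting on edges of $C$ and stamens are placed analogously, each in its own sliver. After this step the drawing of $G$ restricted to $V(C)\cup V(C_{\shell})$ is planar, each bounded face is a thin star-shaped (or convex) polygon, and crucially the new outer boundary $C'$ — consisting of $C_{\shell}$ together with the edge $e$ — is a one-sided convex polygon $\Gamma_{C'}$ with the same base $e$, since all of $C_{\shell}$ has been placed inside the slab below $v_1 v_k$.

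Then I would recurse. For $C'$ to play the role of $C$ in the next round I must check that $C'$ is outerchordless in $G$ and that every petal of $C'$ is realizable; the former holds because any outer chord of $C'$ would violate maximality/triconnectivity arguments as in the proofs above, and the latter is where Lemma~\ref{lem:disjoint-paths} does the real work — it supplies, for each would-be unrealizable configuration, three disjoint chordless paths to the outer face avoiding the rest of $C$, which together with the one-sided placement forces the petal of $C'$ to be realizable. Each bounded ``sliver'' face between $C$ and $C'$ is star-shaped, so by Theorem~\ref{theorem:HongNagamochi} (after triangulating it) the part of $G$ drawn inside it can be completed; and the big face of $\Gamma_{C'}$ still to be filled has one-sided outer polygon and strictly fewer vertices of $G\setminus C'$ outside it than $G\setminus C$ had outside $C$, so induction on the number of vertices outside the fixed cycle applies. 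The base case is when $C$ has no petals or stamens outside it at all, where there is nothing to draw.

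The main obstacle I anticipate is the simultaneous geometric realization of the whole outer layer: placing each outermost petal/stamen in its own sliver so that (a) the region it bounds with its base is star-shaped, (b) it lies below the base line $v_1 v_k$ so that $\Gamma_{C'}$ comes out one-sided, and (c) no two outer edges of distinct petals cross. Condition (c) is precisely the scenario abstracted by Fact~\ref{stat:no_crossings}, so the heart of the argument is a careful induction along $C$ that maintains an explicit geometric invariant (e.g., each placed petal lies inside a prescribed triangle $\triangle o v_i v_j$ determined by the convex quadrilateral of the slab), letting Fact~\ref{stat:no_crossings} be applied to each consecutive pair. Verifying that $C'$ inherits outerchordlessness and petal-realizability — i.e., that we have not created a bad configuration one layer out — is the combinatorial counterpart of this obstacle, and is exactly what Lemmas~\ref{lem:petal-cycle}–\ref{lem:disjoint-paths} were set up to handle.
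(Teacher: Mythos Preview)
Your proposal has a genuine geometric gap at its core. You want to place every outermost petal $w_{i,j}$ ``close to its base $C[v_i,v_j]$'' \emph{and} ``below the line through $v_1v_k$ extended'', so that the new boundary $C'$ is one-sided with base~$e$. These two requirements are incompatible. A petal whose base lies on the upper arc of $\Gamma_C$ (say $w_{2,4}$ when $k\ge 5$) must be drawn in $\cone(w_{2,4})$, which lies entirely above $\Gamma_C$; no point of that cone is below the line through $v_1v_k$. So the ``slab below $e$'' cannot absorb the whole outer layer, and the sentence ``the new outer boundary $C'$ --- consisting of $C_{\shell}$ together with the edge $e$'' is incoherent: $e$ is an edge of $C$, strictly interior to $C_{\shell}$, and cannot be part of the new outer cycle. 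Even if you drop the ``below $v_1v_k$'' demand and simply place each petal/stamen in its own cone close to $C$, the resulting $C_{\shell}$ wraps all the way around $\Gamma_C$ and is in general neither convex nor one-sided; getting it to be one-sided is precisely the non-trivial parabola construction of Section~\ref{sec:main-theorem}, which you have not supplied. Finally, your invocation of Lemma~\ref{lem:disjoint-paths} to certify realizability of the petals of $C'$ is not what that lemma says: it produces three disjoint chordless paths from three specified vertices of $C$ to the outer triangle, not a realizability certificate for petals of a different cycle.

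The paper's proof takes a completely different, non-recursive route. It applies Lemma~\ref{lem:disjoint-paths} once to obtain chordless disjoint paths $P_{v_1a},P_{v_jb},P_{v_kc}$ from $v_1,v_j,v_k$ to the outer triangle $abc$, then lays out $P_{v_1a}$ and $P_{v_kc}$ explicitly on three carefully chosen lines $\ell_a^1,\ell_a^2,\ell_c$ (with a breakpoint at the first vertex $w$ of $P_{v_1a}$ adjacent to $C[v_{i+1},v_j]$), using Fact~\ref{stat:no_crossings} repeatedly to show these placements are crossing-free. This yields a drawing of the subgraph $G'$ induced by $C\cup P_{v_1a}\cup P_{v_kc}$ whose outer path $P_{ac}$ is $x$-monotone and whose inner faces are star-shaped; $b$ is then placed far above $P_{ac}$, and a single application of Theorem~\ref{theorem:HongNagamochi} to every star-shaped face finishes the drawing of $G$. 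There is no layer-by-layer recursion and no need for the new boundary to be one-sided.
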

\begin{proof}
For the following construction we need to ensure that $G$ is a maximal plane graph and cycle $C$ is strictly internal. If this is not the case, we apply Proposition~\ref{prop:triangulation}, to complete $G$ to a maximal plane graph, where $C$ is strictly internal. For simplicity of notation, we refer to the new graph as to $G$. Notice that the application of Proposition~\ref{prop:triangulation} yields introduction only of trivial, and therefore, realizable petals. Moreover, no outer chord is introduced. After the construction of the extension of $C$, we remove the vertices and edges added by the application of Proposition~\ref{prop:triangulation}.

Let $v_1,\dots,v_k$ be the clockwise ordering of the vertices of $C$, so that $(v_1,v_k)$ is the base of $\Gamma_C$. We rotate $\Gamma_C$ so that $(v_1,v_k)$ is horizontal.  Let $a,b,c$ be the vertices of the external face of $G$, in clockwise order, see Fig.~\ref{fig:menger}. By Lemma~\ref{lem:disjoint-paths}, there exists a vertex $v_j$, $1<j<k$, such that there exist chordless disjoint paths between $v_1,v_j,v_k$, and the vertices $a,b,c$, respectively. Without loss of generality assume they are $P_{v_1a}$, $P_{v_jb}$ and $P_{v_kc}$. Some vertices of $P_{v_1a}$ and $P_{v_kc}$ are possibly adjacent to each other, as well as to the boundary of $C$. Depending on these adjacencies, we show how to draw the paths $P_{v_1a}$, $P_{v_kc}$ and how to place vertex $b$, so that the graph induced by these vertices and cycle $C$ is drawn with star-shaped faces. Then, the drawing of $G$ can be completed by applying Theorem~\ref{theorem:HongNagamochi}.
Let $v_i$ be the topmost vertex of $\Gamma_C$. It can happen that there are two adjacent topmost vertices $v_i$ and $v_{i+1}$.
However, $v_{i-1}$ and $v_{i+2}$ are lower, since $\Gamma_C$ does not contain flat vertices. In the following, we assume that $v_{i}$ and $v_{i+1}$ have the same $y$-coordinate. The case when $v_i$ is unique can be seen as a special case where $v_i=v_{i+1}$.
Without loss of generality assume that $i+1\leq j \leq k-1$, the case where $2\leq j \leq i$ is treated symmetrically.
Notice that the presence of the path $P_{v_jb}$ ensures that edges between vertices of $P_{v_1a}$ and $P_{v_kc}$ can only lie in the interior of the cycle delimited by these paths and  edges $(v_1,v_k)$ and $(a,c)$ (refer to Figure~\ref{fig:menger}). Consider a vertex of $P_{v_1a}$ which is a petal of $C$.  The base of such a petal cannot contain edge $(v_{k-1},v_k)$, since this would cause a crossing with $P_{v_kc}$.  Moreover, if the base of this petal contains edge $(v_1,v_k)$, then it cannot contain any edge $(v_f,v_{f+1})$ for $i\le f < j$, since otherwise this petal would not be realizable. Thus a vertex of $P_{v_1a}$ is either adjacent  to $v_k$ or to a vertex $v_f$, where $i+1 \leq f \leq j$, but not both. It is worth mentioning that a vertex of $P_{v_1a}$ cannot be adjacent to any $v_f$, $j+1 \leq f\leq k-1$, since such an adjacency would cause a crossing either with $P_{v_jb}$ or with $P_{v_kc}$. 
\begin{figure}[t]
 \centering
 {\includegraphics[scale=0.55]{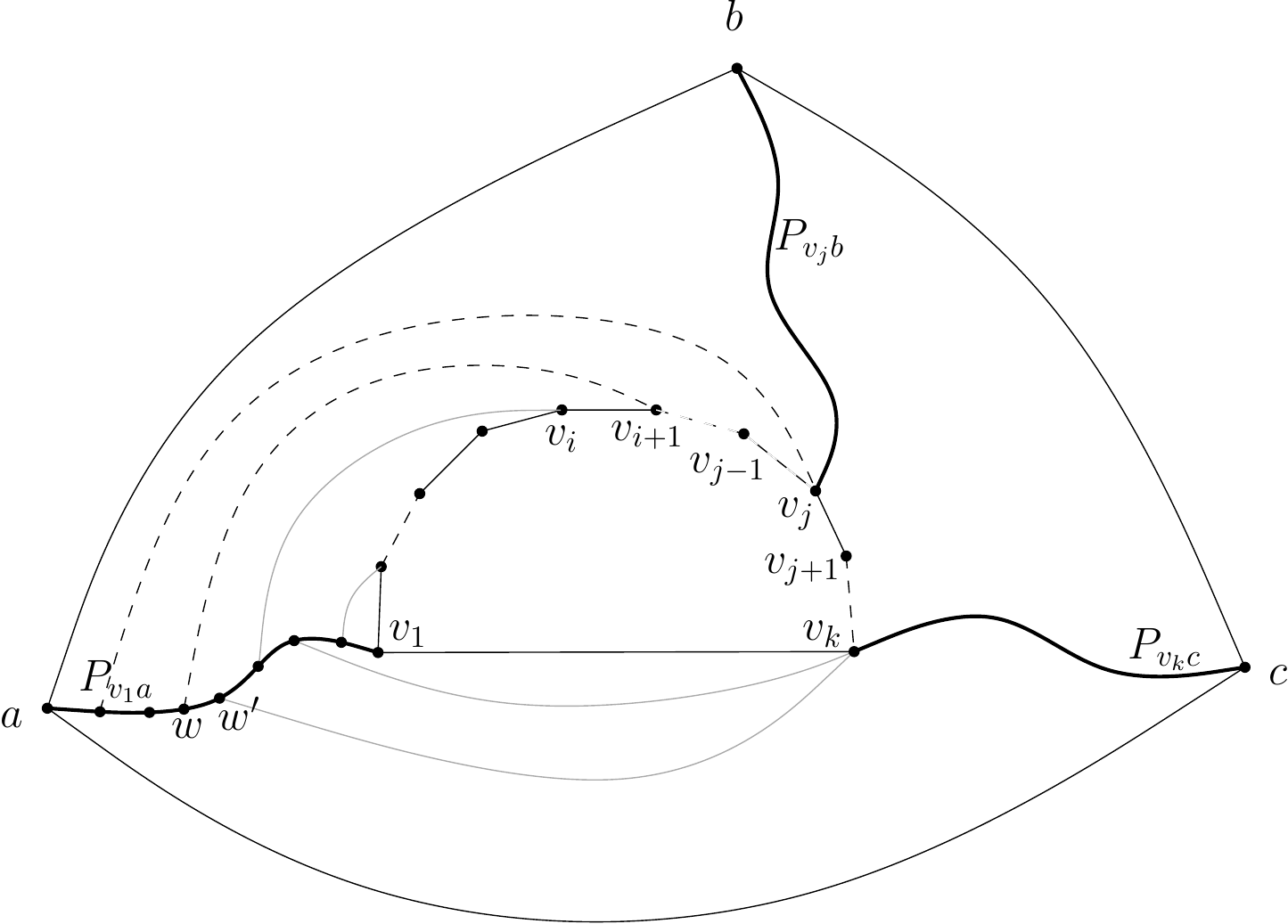}}
 \caption{Illustration for the proof of Theorem~\ref{theor:one-sided}. Edges between $C[v_1,v_i]$ and $P_{v_1a}[v_1,w']\cup \{v_k\}$ are gray. Edges between $C[v_{i+1},v_j]$ and $P_{v_1a}[w,a]$ are dashed.}
 \label{fig:menger} 
\end{figure} 

Let $\ell_a$, $\ell$ and $\ell_c$ be three distinct lines through $v_j$ that lie clockwise between the slopes of edges $(v_{j-1},v_j)$ and $(v_j,v_{j+1})$ (see Figure~\ref{fig:one_sided}). Such lines exist since $\Gamma_C$ does not contain flat vertices. 
Let $\ell_i$ be the line through $v_i$ with the slope of $(v_{i-1},v_i)$. Let $\ell_a^1$ be the half-line originating at an internal point of $(v_1,v_k)$ towards $-\infty$, slightly rotated counterclockwise from the horizontal position, so that it crosses $\ell_i$. Let $q$ denote the intersection point of $\ell_a^1$ and $\ell_i$. Let $p$ be any point on $\ell_a^1$ further away from $v_1$ than $q$. Let $\ell_a^2$ be the line through $p$ with the slope of $\ell$. By construction of lines $\ell_a$, $\ell$ and $\ell_c$,  line $\ell_a^2$ crosses $\ell_a$ above the polygon $\Gamma_C$ at point $p_a$ and line $\ell_c$ below this polygon at point $p_c$.  

Let $G'$ be the plane subgraph of $G$ induced by the vertices of $C$, $P_{v_1a}$, and $P_{v_kc}$. The outer cycle of $G'$ consists of edge $(a,c)$ and a path $P_{ac}$ between vertices $a$ and $c$.
\begin{stat}
The vertices of $P_{v_1a}$ and $P_{v_kc}$ can be placed on lines $\ell_a^1$, $\ell_a^2$ and  $\ell_c$ such that in the resulting straight-line drawing of $G'$, path $P_{ac}$ is represented by an $x$-monotone polygonal chain, and the inner faces of $G'$ are star-shaped polygons.
\end{stat}
The vertices of $P_{v_1a}$ will be placed on line $\ell_a^1$ between points $p$ and $q$ and on line $\ell_a^2$ above point $p_a$. The vertices of $P_{v_kc}$ will be placed on $\ell_c$ below $p_c$.
In order to place the vertices, we need to understand how the vertices of $P_{v_1a}$ are adjacent to vertices of $C$.
As we travel on $P_{v_1a}$ from $v_1$ to $a$, we first meet all vertices adjacent to $v_1,\dots,v_i$ and then all vertices adjacent to $v_{i+1},\dots,v_j$, since $G$ is a planar graph. Let $w$ be the first vertex of $P_{v_1a}$ adjacent to $v_{f}$, $i+1 \leq f \leq j$, and let $w'$ be the vertex preceding $w$ on $P_{v_1a}$. We place vertices of $P_{v_1a}[v_1,w']$, in the order they appear in the path, on line $\ell_a^1$, between $q$ and $p$, in increasing distance from $v_1$. We place all vertices of $P_{v_1a}[w,a]$ on $\ell_a^2$ above $p_a$ in increasing distance from $p$. We draw the edges between the vertices of $C$ and $P_{v_1a}$.  Notice that vertex $w$ might not exist, since it might happen that none of the vertices of $P_{v_1a}$ is adjacent to $v_{f}$, $i+1 \leq f \leq k$. In this case all vertices of $P_{v_1a}$ are placed on line $\ell_a^1$, between $q$ and $p$.
\begin{figure}[t]
 \centering
 {\includegraphics[scale=0.55]{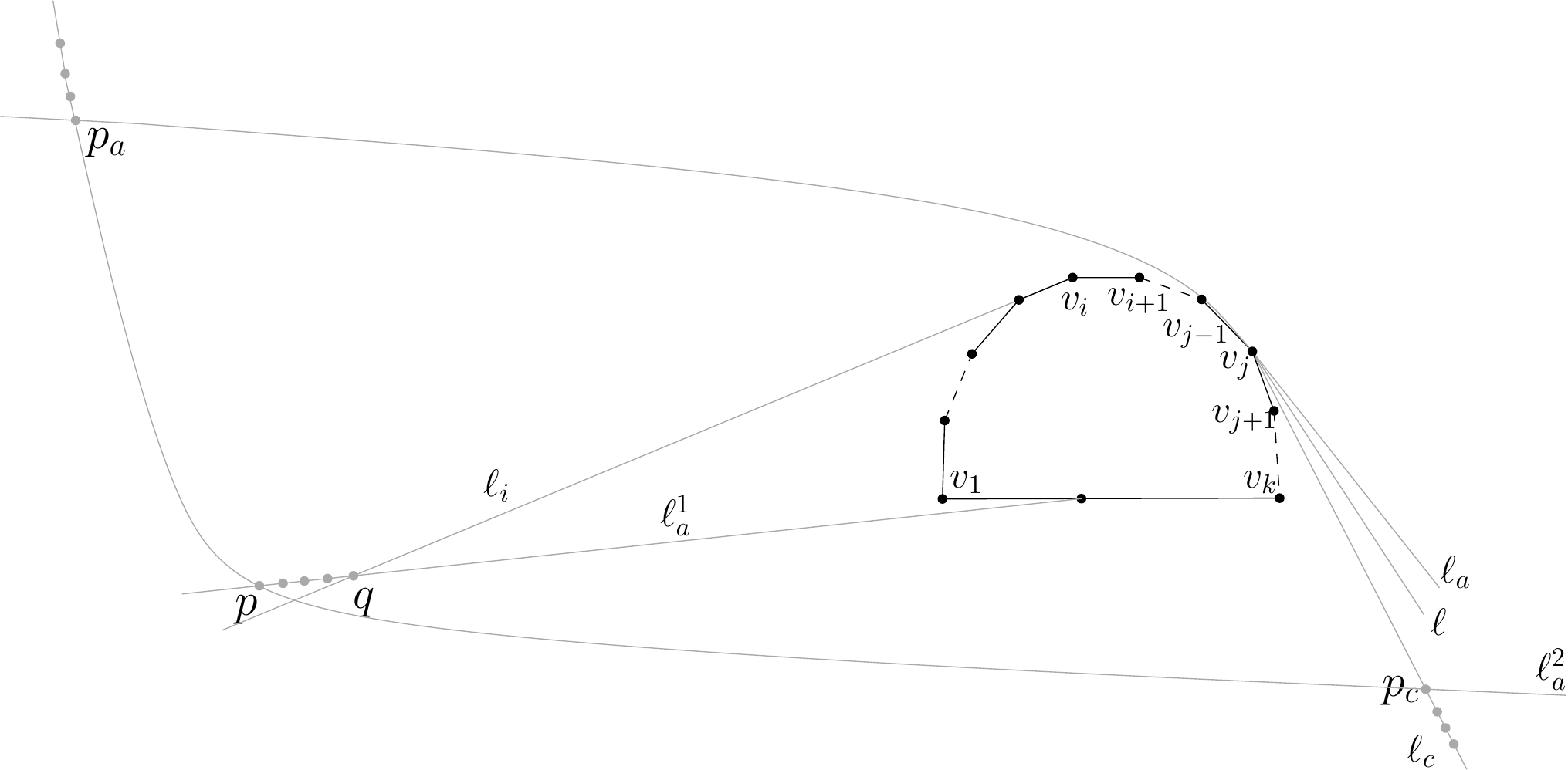}}
 \caption{Illustration for the proof of Theorem~\ref{theor:one-sided}. For space reasons lines were shown by curves.}
 \label{fig:one_sided} 
\end{figure}  
In the following, we show that the constructed drawing is planar. Notice that the quadrilateral formed by the points $w,a,v_j,v_{i+1}$ is convex, by the choice of line $\ell_a^2$ and the positions of vertices $w$ and $a$ on it. Also, notice that the points of vertices $v_{i+1},\dots,v_j$ form a one-sided polygon with base segment $\overline{v_{i+1}v_j}$, which lies in the triangle $\triangle ov_jv_{i+1}$, where $o$ is the intersection of $\overline{v_{i+1}a}$
and $\overline{v_{j}w}$. Thus, by Fact~\ref{stat:no_crossings}, the edges connecting $C[v_{i+1},v_j]$ and $P_{v_1a}[w,a]$  do not cross each other. By applying Fact~\ref{stat:no_crossings}, we can also prove that edges connecting $P_{v_1a}[v_1,w']$ with $C[v_1,v_i]$, cross neither each other, nor $\Gamma_C$. Recalling that vertices of $P_{v_1a}[v_1,w']$ can be also adjacent to $v_k$, we notice that these edges also do not cross $\Gamma_C$, by the choice of line $\ell_a^1$. Finally, path $P_{v_1a}$ is chordless, and therefore the current drawing is planar. Notice that the subpath of $P_{a,c}$ that has already been drawn is represented by an $x$-monotone chain.
We next draw the vertices of $P_{v_kc}$. We observe that in the already constructed drawing path $P_{v_1a}$ taken together with edge $(v_1,v_k)$ is represented by an $x$-monotone chain, each point of which is visible from any point below the line $\ell_a^2$. This means that any point below line $\ell_a^2$, can be connected by a straight-line segment to the vertices $V(P_{v_1a}) \cup \{v_k\}$ without creating any crossing either with $P_{v_1a}$ or with $(v_1,v_k)$.  
We also notice that any of the vertices $v_j,\dots,v_k$ can be connected to a point of $\ell_c$, without intersecting $\Gamma_C$.
Recall that $p_c$ denotes the intersection point of $\ell_c$ and $\ell_a^2$. Thus we place the vertices of $P_{v_kc}$ on the line $\ell_c$, below $\ell_a^2$, in  increasing distance from point $p_c$. Applying Fact~\ref{stat:no_crossings} we can prove that the edges induced by $\{v_j,\dots,v_k\} \cup V(P_{v_kc})$ are drawn without crossings. Edges between $P_{v_kc}$ and $P_{v_1a}$ cross neither $P_{v_1a}$, nor $(v_1,v_k)$ by the choice of lines $\ell_c$ and $\ell_a^2$. 

We have constructed a planar straight-line drawing of $G'$. We notice that path $P_{ac}$ is drawn as an $x$-monotone polygonal chain. We also notice that the faces of $G'$, created when placing vertices of $P_{v_1a}$ (resp. $P_{v_kc}$) are star-shaped and have their kernels arbitrarily close to the vertices of  $P_{v_1a}$ (resp. $P_{v_kc}$). 

Notice that vertex $b$ is possibly adjacent to some of the vertices of $P_{ac}$. Thus, placing $b$ at an appropriate distance above $P_{ac}$, the edges between $b$ and $P_{ac}$ can be drawn straight-line without intersecting $P_{ac}$    and therefore no other edge of $G'$. The faces created when placing $b$ are star-shaped and have their kernels arbitrarily close to $b$. We finally apply Theorem~\ref{theorem:HongNagamochi}. \qed 
\end{proof}

\section{Main theorem}
\label{sec:main-theorem}

Let $G$ be a maximal plane graph and $C$ be a strictly internal simple outerchordless cycle of $G$, represented by an arbitrary convex polygon $\Gamma_C$ in the plane.
In Theorem~\ref{theorem:convex} we prove that it is still true that if each petal of $C$ is realizable, then $\Gamma_C$ is extendable.
Before stating and proving Theorem~\ref{theorem:convex}, we introduce notation that will be used through this section. 

Recall that $v_1,\dots,v_k$ denote the vertices of $C$. Let $w_{i,j}$ be an outermost petal of $C$ in $G$. 
Let $\ell_i$ (resp. $\ell_j$) be a half-line with the slope of edge $(v_i,v_{i+1})$ (resp. $(v_{j-1},v_j)$) originating at $v_i$ (resp. $v_j$)
 (see
Figure~\ref{fig:cone_apex}). Since $w_{i,j}$ is realizable, 
lines $\ell_i$ and $\ell_j$ intersect. Denote by $\apex(w_{i,j})$ their
intersection point and by $\cone(w_{i,j})$ the subset of $\mathbb{R}^2$ that
is obtained by the intersection of the half-planes defined by $\ell_i$ and $\ell_j$, not containing 
$\Gamma_C$. It is clear that any internal point of 
$\cone(w_{i,j})$ is appropriate to draw $w_{i,j}$ so that the plane subgraph of $G$
induced by $V(C) \cup \{w_{i,j}\}$ is crossing-free. For consistency, we also define $\cone(w)$ and $\apex(w)$
 of an outer stamen $w$ of $C$ as follows. Assume that $w$ is adjacent to $v_i \in C$. Then $\cone(w) \subset \mathbb{R}^2$ is the union of the half-planes defined by lines of edges $(v_{i-1},v_i)$ and $(v_i,v_{i+1})$, that do not contain $\Gamma_C$. We set $\apex(w)=v_i$.  

\begin{figure}[t]
 \centering
 \subfigure[\label{fig:cone_apex}]
 {\includegraphics[scale=0.7]{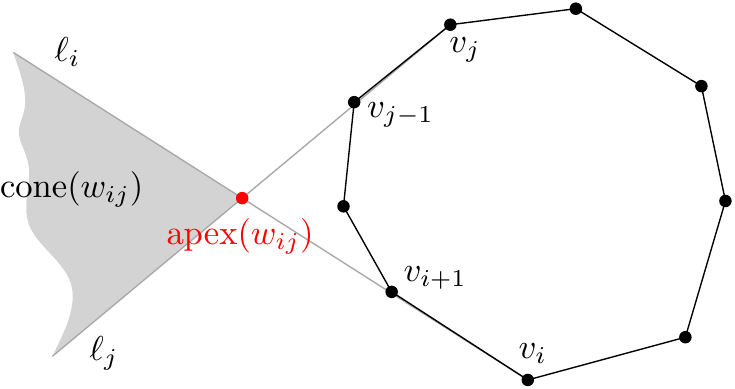}}
 \hspace{+1cm}
 \subfigure[\label{fig:main_theorem}]
 {\includegraphics[scale=0.55]{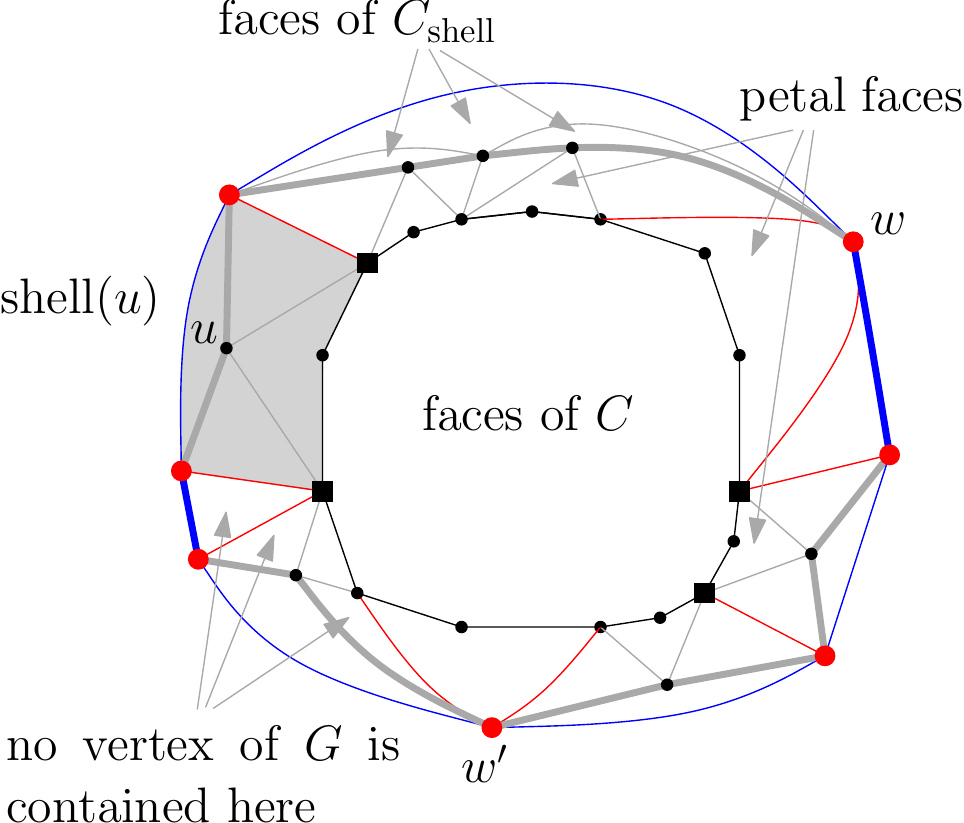}}
 \caption{(a) Vertex $w_{i,j}$ is the petal of $C$ with base $C[v_i,v_j]$. Point $\apex(w_{i,j})$ is red, region $\cone(w_{i,j})$ is gray. (b) Graph $G_{\shell}$. Polygon $\Gamma_C$ is black. Cycle $C_{\shell}$ is bold gray. Cycle $C_{\shell}'$ is blue. Graph $G_{\shell}'$ is comprised by blue, red and black edges. Vertices of $B$ are squares. }
\end{figure}

Let $P$ (resp. $S$) denote the set of outermost petals (resp. stamens) of $C$ in $G$. 
By Lemma~\ref{lem:petal-cycle}, there exists a cycle~$C_{\shell}$ in~$G$ that
contains exactly~$P \cup S$. Let $G_{\shell}$ denote
the plane subgraph of $G$ induced by the vertices of $C$ and $C_{\shell}$. (Figure~\ref{fig:main_theorem}). Let $C_{\shell}'$ denote the outer cycle of $G_{\shell}$. We denote the graph consisting of $C$, $C_{\shell}'$ and edges between them by $G_{\shell}'$. Each petal or stamen of $C$, say $w$, that belongs to $C_{\shell}$ but not to $C_{\shell}'$, belongs to a face of $G_{\shell}'$. We denote this face by $\shell(w)$. 
We categorize the faces of $G_{\shell}$ as follows. The faces that lie inside cycle $C$ are called \emph{faces of $C$}.  The faces that are bounded only by $C_{\shell}$ and its chords, are called \emph{faces of $C_{\shell}$}. Notice that each face of $C_{\shell}$ is a triangle. 
Notice that a face of $G_{\shell}$ that is comprised by two consecutive edges adjacent to the same vertex of $C$ (not belonging to $C$), is a triangle, and contains no vertex of $G \setminus G_{\shell}$, since both facts would imply that the taken edges are not consecutive.
Finally, faces bounded by a subpath of $C$ and two edges adjacent to the same petal, are called \emph{petal faces}. The plane subgraph of $G$ inside a petal face is triangulated and does not have a chord connecting two vertices of its outer face, and therefore is triconnected.  Thus we have the following 

\begin{obs}
\label{obs:G_structure}
Each vertex of $G \setminus G_{\shell}$ either lies in a face of $C$, or in a face that is a triangle, or in a petal face, or outside $C'_{\shell}$. Each subgraph of $G$ inside a petal face is triconnected.
\end{obs}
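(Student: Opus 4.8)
This observation is essentially a structural summary of the case analysis in the paragraph preceding it, so the plan is to assemble those facts into a clean argument. I would first reduce the first statement to a classification of the faces of the plane graph $G_{\shell}$: since $G_{\shell}$ is a connected plane subgraph of $G$, every vertex $x \in V(G)\setminus V(G_{\shell})$ lies in the interior of exactly one face of $G_{\shell}$, so it suffices to show that each face of $G_{\shell}$ is one of the following: (a) a face of $C$, i.e.\ one inside $C$; (b) a face of $C_{\shell}$, i.e.\ one bounded only by edges and chords of $C_{\shell}$; (c) a triangle formed by two consecutive non-$C$ edges incident to a common vertex of $C$; (d) a petal face; or (e) the outer face, i.e.\ the region outside $C_{\shell}'$. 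A vertex in a face of type (a) then lies in a face of $C$; in a face of type (b) or (c) it lies in a triangle (and I would note, as in the preceding paragraph, that type (c) faces in fact contain no vertex of $G\setminus G_{\shell}$); in a face of type (d) it lies in a petal face; and in a face of type (e) it lies outside $C_{\shell}'$ — which is exactly the first statement.

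To establish the classification, I would observe that faces inside $C$ and the outer face are of types (a) and (e) by definition, and then consider a face $f$ lying in the closed region between $C$ and $C_{\shell}'$ with a boundary edge not on $C$. If $\partial f$ avoids $C$, then $f$ is bounded only by edges and chords of $C_{\shell}$; I would argue it is a triangle using that every inner face of $G$ is a triangle, after ruling out chords of $C_{\shell}$ drawn on the $C$-side of $C_{\shell}$ (such a chord would separate an outermost petal or stamen from $C$, forcing one of its outer edges to cross the chord). If instead $\partial f$ meets $C$, then, using that consecutive outermost petals and stamens around $C$ are adjacent in $G$ by Lemma~\ref{lem:petal-cycle}(ii), the only possibilities are that $\partial f$ meets $C$ in a single vertex via two consecutive incident edges (type (c)), or in a nontrivial subpath of $C$ closed up by two edges incident to a common outermost petal (type (d)). I expect making this case distinction exhaustive — in particular handling the chords of $C_{\shell}$ and the distinction between $C_{\shell}$ and $C_{\shell}'$ — to be the main obstacle; the other steps are routine.

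Finally, for the second statement, let $f$ be a petal face whose petal $w$ has consecutive neighbors $v_s$ and $v_t$ on $C$ bounding $f$, so that the subgraph $H$ of $G$ drawn inside $f$ has outer cycle $C[v_s,v_t]$ together with the edges $wv_s$ and $wv_t$. Since $G$ is maximal plane, $H$ is internally triangulated, so by the fact recalled in Section~\ref{sec:definitions} it is triconnected if and only if its outer cycle has no chord. A chord would either join two vertices of $C[v_s,v_t]$ and lie outside $C$, contradicting that $C$ is outerchordless, or join $w$ to a vertex of $C$ strictly between $v_s$ and $v_t$, contradicting that $v_s$ and $v_t$ are consecutive neighbors of $w$ on $C$. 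Hence $H$ is triconnected, which would complete the proof.
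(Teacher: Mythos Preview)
Your proposal is correct and follows essentially the same approach as the paper. In fact, the paper does not give a separate proof of this observation at all: the sentence ``Thus we have the following'' signals that the observation is merely a summary of the face classification and the triconnectedness remark in the immediately preceding paragraph. Your write-up is a more careful and explicit version of that same reasoning --- classifying the faces of $G_{\shell}$ into the types (a)--(e), noting that type (c) faces are in fact empty, and deducing triconnectedness of the subgraph inside a petal face from internal triangulation plus absence of outer-cycle chords.
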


\begin{figure}[t]
 \centering
 \subfigure[\label{fig:main_theorem_lines}]
 {\includegraphics[scale=0.55]{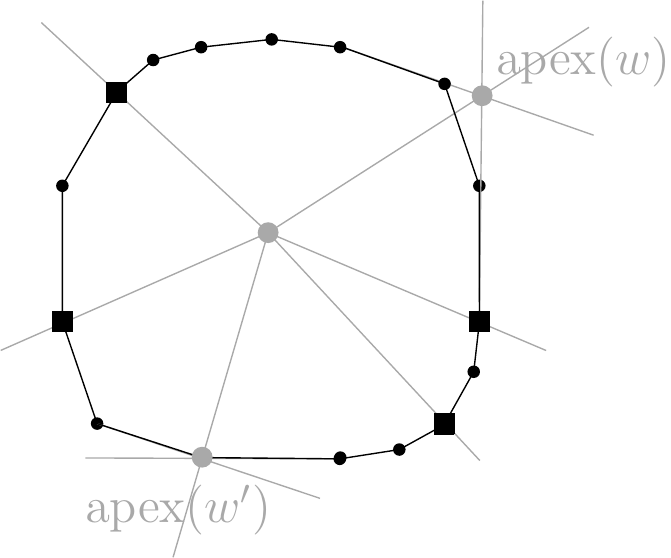}}
 \hspace{+1cm}
 \subfigure[\label{fig:main_theorem_c}]
 {\includegraphics[scale=0.55]{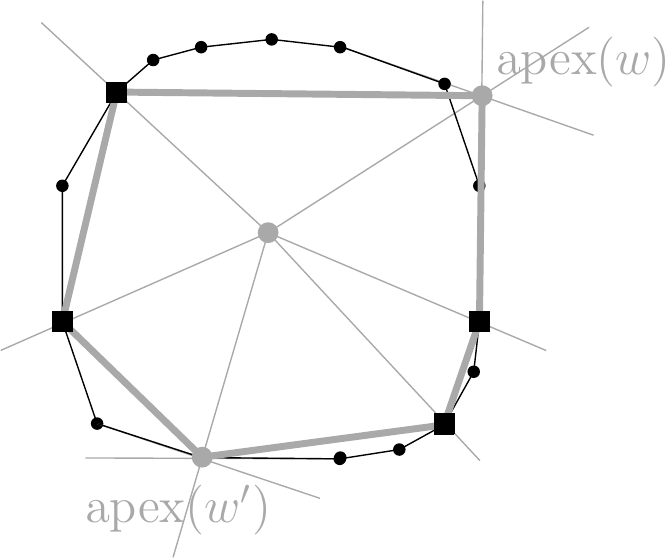}}
 \caption{Construction of drawing of graph $G_{\shell}$ shown in Figure~\ref{fig:main_theorem}. (a) Apex points are gray, points of $B$ are black squares. (b) Polygon $\Pi$ is gray, lines $\{\ell(w) \mid w\in S\ \cap C_{\shell}'\}$ are dashed.}
\end{figure}

\vspace{-0.5cm}
\begin{theorem}
\label{theorem:convex} Let $G$ be a plane graph and let $C$ be
a simple outerchordless cycle of $G$, represented by a convex polygon $\Gamma_C$ in the plane. 
$\Gamma_C$ is extendable to a straight-line drawing of $G$ if and only if each petal of $C$ is realizable.
\end{theorem}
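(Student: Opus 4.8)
The plan is to prove only the non-trivial implication; necessity is immediate, since restricting any planar straight-line extension $\Gamma_G$ of $\Gamma_C$ to the subgraph of $G$ induced by $V(C)\cup\{w_{i,j}\}$ exhibits an arbitrary petal $w_{i,j}$ as realizable. For the converse I would first invoke Proposition~\ref{prop:triangulation} to reduce to the case that $G$ is maximal plane and $C$ is strictly internal: this only introduces trivial, hence realizable, petals, creates no outer chords, and the added vertices and edges are deleted at the very end. I would then set up, as in the paragraphs preceding the theorem, the cycle $C_{\shell}$ through the outermost petals and stamens $P\cup S$ of $C$ (Lemma~\ref{lem:petal-cycle}), the graph $G_{\shell}$, its outer cycle $C_{\shell}'$, the graph $G_{\shell}'$ consisting of $C$, $C_{\shell}'$ and the edges between them, and the faces $\shell(w)$; and I would record via Observation~\ref{obs:G_structure} that every vertex of $G$ not in $G_{\shell}$ lies inside a face of $C$ (which is bounded by a convex polygon once $\Gamma_C$ is fixed, since $C$ is outerchordless and $\Gamma_C$ is convex), inside a triangular face of $G_{\shell}$, inside a petal face, or strictly outside $C_{\shell}'$, and that the subgraph of $G$ inside any such bounded region is triconnected (for petal faces and triangles by the observation; for faces of $C$ because $G$ is internally triangulated and maximal).

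The core of the argument is a geometric construction, the one sketched in Figures~\ref{fig:main_theorem_lines} and~\ref{fig:main_theorem_c}: I would produce a planar straight-line drawing of $G_{\shell}'$ extending $\Gamma_C$ that has three properties. (a) The outer cycle $C_{\shell}'$ is drawn as a one-sided convex polygon $\Gamma_{C_{\shell}'}$; (b) every bounded face of $G_{\shell}'$ lying between $C$ and $C_{\shell}'$ — each petal face, each triangular face, each $\shell(w)$ — is drawn as a star-shaped polygon; and (c) $C_{\shell}'$ is outerchordless in the part of $G$ outside it and every petal of $C_{\shell}'$ there is realizable with respect to $\Gamma_{C_{\shell}'}$. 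To build such a drawing I would place each $w\in P\cup S$ inside $\cone(w)$ — nonempty because every outermost petal is realizable by hypothesis and every outer stamen is realizable automatically — choosing the images of the $w$'s on a suitable one-sided convex polygon $\Pi$ contained in the union of the cones and, for the stamens lying on $C_{\shell}'$, consistent with the auxiliary half-lines $\ell(w)$ through $\apex(w)$. Placing each $w$ close to $\apex(w)$ keeps the faces between $C$ and $C_{\shell}'$ star-shaped with kernels near the newly placed vertices, while drawing $\Pi$ sufficiently ``flat'' (close to its base edge) makes $\Gamma_{C_{\shell}'}$ one-sided and leaves, for every petal of $C_{\shell}'$, enough room outside $\Gamma_{C_{\shell}'}$ for the two extreme edges of its base to meet on the outer side; crossing-freeness would be checked much as in the one-sided case, via a local argument in the spirit of Fact~\ref{stat:no_crossings}.

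With this drawing in hand the extension is completed in two rounds. First, Theorem~\ref{theor:one-sided} applied to the subgraph of $G$ outside $C_{\shell}'$, together with the one-sided polygon $\Gamma_{C_{\shell}'}$ and the realizable petals guaranteed by~(c), extends the drawing to all of that subgraph. Second, for each face of $C$, each triangular face and each petal face, the subgraph of $G$ inside it is triconnected and its boundary is drawn as a star-shaped polygon by~(a) and~(b), so Theorem~\ref{theorem:HongNagamochi} fills it in with convex inner faces. Gluing these pieces along their shared boundary cycles yields a planar straight-line extension of $\Gamma_C$ to the current (triangulated) graph; deleting the vertices and edges introduced by Proposition~\ref{prop:triangulation} then gives the extension of the original $G$. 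All steps are constructive and run in linear time.

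I expect the main obstacle to be part~(c) of the construction, namely guaranteeing \emph{simultaneously} that $\Gamma_{C_{\shell}'}$ is one-sided, that the faces between $C$ and $C_{\shell}'$ remain star-shaped, and that every petal of $C_{\shell}'$ is realizable with respect to $\Gamma_{C_{\shell}'}$ — the last being delicate because such a petal need not be a petal of $C$ and therefore carries no a priori realizability hypothesis, so realizability must be engineered entirely by the placement. Reconciling these competing requirements is exactly where the careful choice of the apex points, the lines $\ell(w)$, and the flatness of $\Pi$ must be made, and a good deal of the work will be the routine but nontrivial verification that the resulting straight-line drawing of $G_{\shell}'$ is indeed planar.
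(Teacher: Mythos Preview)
Your overall plan coincides with the paper's: reduce via Proposition~\ref{prop:triangulation}; set up $C_{\shell}$, $G_{\shell}$, $C_{\shell}'$, $G_{\shell}'$; draw $G_{\shell}'$ with $C_{\shell}'$ one-sided; fill the bounded faces with Theorem~\ref{theorem:HongNagamochi}; finish the exterior with Theorem~\ref{theor:one-sided}. You also correctly isolate condition~(c) as the crux.

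The genuine gap is exactly there, and your proposed fix does not work. Making $\Gamma_{C_{\shell}'}$ ``flat'' does not by itself force every petal of $C_{\shell}'$ to be realizable. For any one-sided polygon with base $e=(u,v)$, a petal whose base contains both $u$ and $v$ as \emph{internal} vertices is unrealizable no matter how flat the polygon is: the exterior angles at $u$ and at $v$ each exceed $\pi/2$, so the turning along such a base already exceeds~$\pi$. Hence~(c) cannot be ``engineered entirely by the placement''; what is missing in your sketch is a \emph{combinatorial} choice of which edge of $C_{\shell}'$ becomes the base. The paper makes this choice before any geometry is done: take $e$ to be an edge of $C_{\shell}'$ lying on the outer face of~$G$ if there is one, and otherwise take $e$ so that one of its endpoints is adjacent to an outermost petal of $C_{\shell}'$ in~$G$. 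With that endpoint never internal to a petal's base, every petal of $C_{\shell}'$ is realizable once $C_{\shell}'$ is drawn one-sided with base~$e$.

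A second, smaller inconsistency: ``placing each $w$ close to $\apex(w)$'' is incompatible with ``$\Gamma_{C_{\shell}'}$ one-sided'', because the apex points hug $\Gamma_C$ and their convex hull has no reason to be one-sided. The paper does not place the $w$'s near their apices. It shoots a half-line $\ell(w)$ from a common interior point $p$ of the convex hull of the apex points through each $\apex(w)$ and places $w$ at the intersection of $\ell(w)$ with a large parabola $\lambda$ whose axis bisects the angle between $\ell(u)$ and $\ell(v)$. Points on a parabola are automatically in one-sided convex position with the chord $(u,v)$ as base, the parabola can be chosen large enough to enclose every $\apex(w)$ (so each $w$ lands in $\cone(w)$), and the petal faces are then star-shaped with kernels near the far-away petal vertices. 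Finally, the vertices of $C_{\shell}\setminus C_{\shell}'$ are not placed in the same pass as $C_{\shell}'$; they are added afterward on convex arcs $\mu$ just inside the edges of $C_{\shell}'$, which is what makes the triangular faces of $C_{\shell}$ come out planar.
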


\begin{proof}
Similarly to the proof of Theorem~\ref{theor:one-sided}, we first complete $G$ to a maximal plane graph, such that cycle $C$ becomes strictly internal, by applying Lemma~\ref{prop:triangulation}. The new maximal plane graph (for simplicity of notation denoted also by $G$) contains no outer chord and only realizable petals, since the newly added petals are trivial. When the construction of the extension of $C$ is completed, we simply remove the vertices and edges added by Proposition~\ref{prop:triangulation}. 

The condition that each petal of $C$ is realizable is clearly necessary. Next we show that it is also sufficient.

We first show how to draw the graph $G_{\shell}'$. Afterward we complete it to a drawing of $G_{\shell}$. Our target is to represent $C_{\shell}'$ as a one-sided polygon, so that Theorem~\ref{theor:one-sided} 
can be applied for the rest of $G$ that lies outside $C_{\shell}'$.
We first decide which edge of $C_{\shell}'$ to ``stretch'', i.e., which edge will serve as base edge of the one-sided polygon for representing $C_{\shell}'$. 
In order to be able to apply Theorem~\ref{theor:one-sided}, this one-sided polygon should be such that each petal of $C_{\shell}'$  is realizable. 
Thus we choose the base edge $e$ of $C_{\shell}'$  as follows. If $C_{\shell}'$  contains an edge on the outer face of $G$, we choose $e$ to be this edge. 
Otherwise, we choose an edge $e$, such that at least one of the end vertices of $e$ is adjacent to an outermost petal of $C_{\shell}'$ in $G$. Such a choice of $e$ ensures that each petal of $C_{\shell}'$ is realizable. 

\begin{stat}
Polygon $\Gamma_C$ can be extended to a straight-line drawing of  graph $G_{\shell}'$, such that its outer face $C_{\shell}'$ is represented by a one-sided polygon with base edge $e$. Moreover, $C_{\shell}'$ contains in its interior all points of $\{\apex(w)\mid w \in C_{\shell} \}$.
\end{stat}

Recall that $P$ (resp. $S$) denotes the set of outermost petals (resp. stamens) of $C$ in $G$.  Let $B$ denote the set of vertices of $C$, to which stamens $S \cap C_{\shell}'$ are adjacent (refer to Figure~\ref{fig:main_theorem}). By construction of the $\apex$ points, the set $\{ \apex(w) \mid w \in P \cap C_{\shell}'\} \cup B$ is in convex position, and we denote by $\Pi$ its convex hull. Polygon $\Pi$ may be degenerate, and may contain only a single vertex or a single edge. We treat these cases separately to complete the construction of the drawing of the graph $G_{\shell}'$.  

\begin{figure}[tb]
 \centering
 {\includegraphics[scale=0.6]{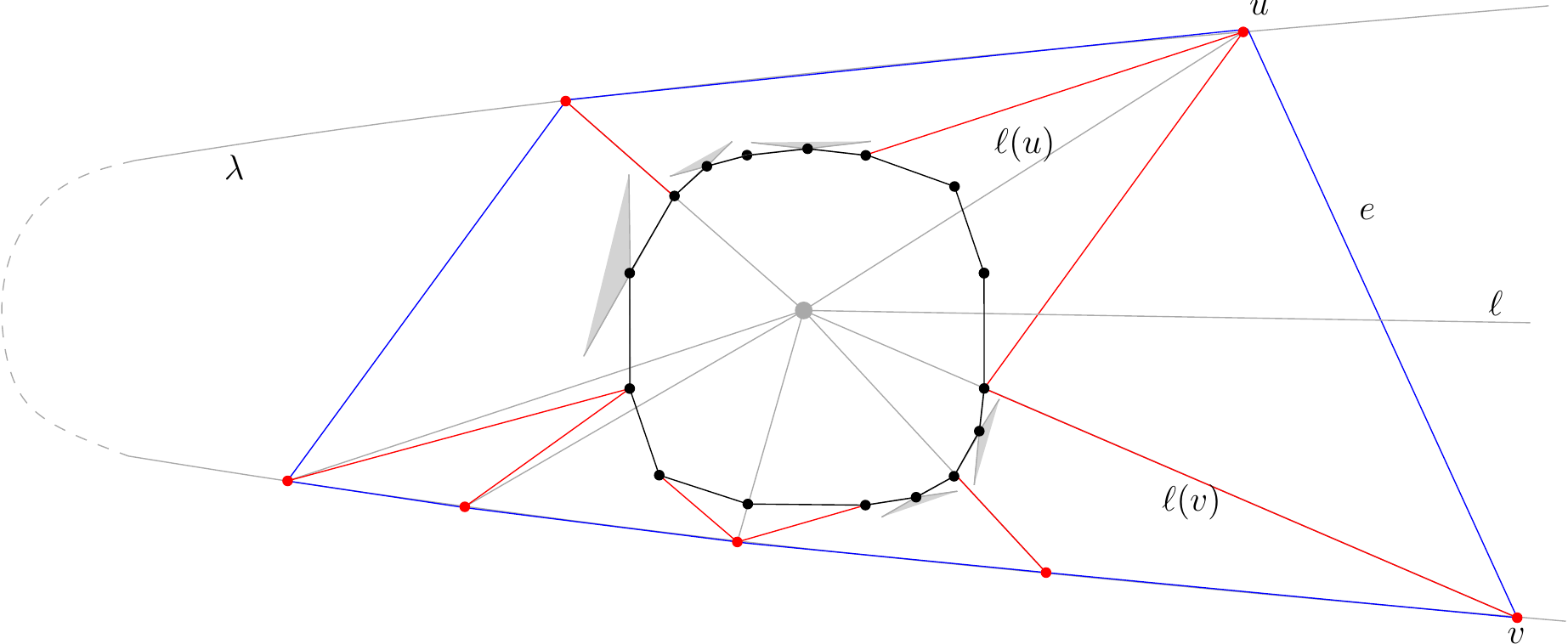}}
 \caption{Construction of Case~1. Corresponding $G_{\shell}$ is shown in Figure~\ref{fig:main_theorem}.}
 \label{fig:main_theorem_lines_d}
\end{figure}

\begin{description}
\item[\bf Case~1: Polygon $\Pi$ is non-degenerate.] 
Let $p$ be a point inside $\Pi$. Let $\ell(w)$ denote a half-line from $p$ through $w$, where $w$ is a vertex of $\Pi$. If we order the constructed half-lines around $p$, any two consecutive lines have between them an angle less than $\pi$. If $w \in B$, we substitute $\ell(w)$ by the same number of slightly rotated lines as the number of stamens of $C_{\shell}'$ adjacent to $w$, without destroying the order (refer to Figure~\ref{fig:main_theorem_c}). Thus, for each $w\in C_{\shell}'$, a line $\ell(w)$ is defined. Notice that, for any $w \in P \cap C_{\shell}'$, line $\ell(w)$ passes through $\apex(w)$, and the infinite part of $\ell(w)$ lies in $\cone(w)$. Thus, for any position of $w$ on a point of $\ell(w) \cap \cone(w)$, edges between $C$ and $w$ do not cross $\Gamma_C$. For a stamen $w \in S \cap C_{\shell}'$, line $\ell(w)$ crosses $\cone(w)$ very close to $\apex(w)$, and its infinite part lies in $\cone(w)$. Thus, similarly, for any position of $w$ on a point of $\ell(w) \cap \cone(w)$, edges between $C$ and $w$ do not cross $\Gamma_C$.
 
Recall that $e=(u,v)$ is the edge of $C'_{\shell}$ that we have decided to ``stretch''. Recall also that $\ell(u)$ and $\ell(v)$ are consecutive in  the sequence of half-lines 
we have constructed. Let $\kappa$ be a circle around $\Gamma_C$ that contains in the interior the polygon $\Pi$ and the set of points $\{\apex(w)| w\in C_{\shell}\}$. Let $\ell$ be a half-line bisecting the angle between $\ell(u)$ and $\ell(v)$ (refer to Figure~\ref{fig:main_theorem_lines_d}). Let $\lambda$ be a parabola with $\ell$ as axis of symmetry and the center of $\kappa$ as focus. We position and parametrize $\lambda$ such that it does not cross $\ell$ and $\kappa$. 

With this placement of $\lambda$, each half-line $\ell(w)$, $w \in \Pi$, crosses $\lambda$, moreover, intersections with $\ell(u)$ and $\ell(v)$ are on different branches of $\lambda$ and appear last on them as we walk on $\lambda$ from its 
origin to infinity. Let $\Pi'$ be the convex polygon comprised by the intersection points of lines $\{\ell(w): w\in V(C_{\shell}')\}$ 
with $\lambda$. We make $\lambda$ large enough, so that the polygon $\Pi'$ still contains the circle $\kappa$ in the interior. As a results, for each $w \in C$, $\cone(w) \cap \Pi'_{\inter} \neq \emptyset$, where $\Pi'_{\inter}$ denotes the interior of $\Pi'$.   
This concludes the proof of the claim in the non-degenerate case.

\item[\bf Case~2: Polygon $\Pi$ is degenerate and has two vertices.] Notice that in this case $C_{\shell}'$ contains at most $2$ petals, since each petal of $C_{\shell}'$ is a vertex of $\Pi$. Assume that $C_{\shell}'$ contains two petals, $w$ and $w'$. Since $C_{\shell}'$ contains at least three vertices ($G$ does not have double edges), there would be at least one stamen $w''$ in $C_{\shell}'$. However, the vertex of $C$, to which $w''$ is adjacent, together with $w$ and $w'$, contribute three vertices to $\Pi$, which is a contradiction to the assumption that $\Pi$ has two vertices. Thus, in this case $C_{\shell}'$ contains at most one petal of $C$.  Thus we consider two subcases, first when $C_{\shell}'$  contains one petal, and second when it contains no petal.

\begin{figure}[tbh]
 \centering
 \subfigure[\label{fig:special_case_2a}]
 {\includegraphics[scale=0.5]{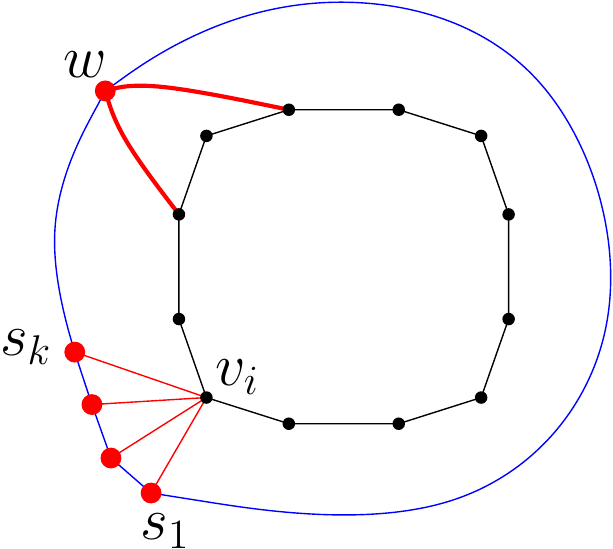}}
 \hspace{+1cm}
 \subfigure[\label{fig:special_case_2b}]
 {\includegraphics[scale=0.5]{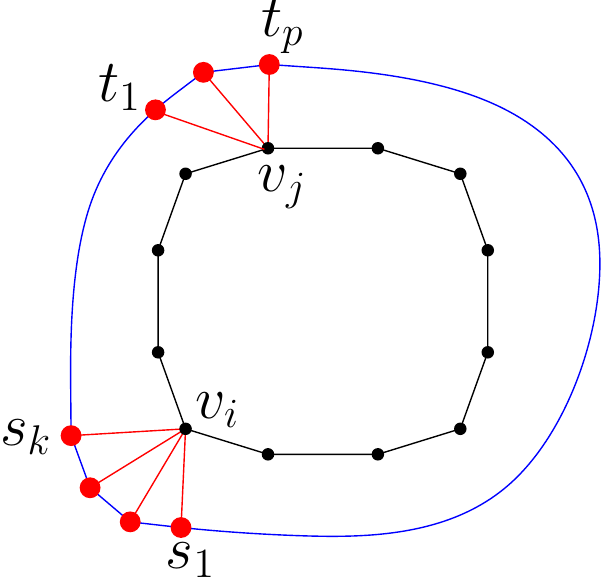}}
 \hspace{+1cm}
 \subfigure[\label{fig:special_case_3b}]
 {\includegraphics[scale=0.5]{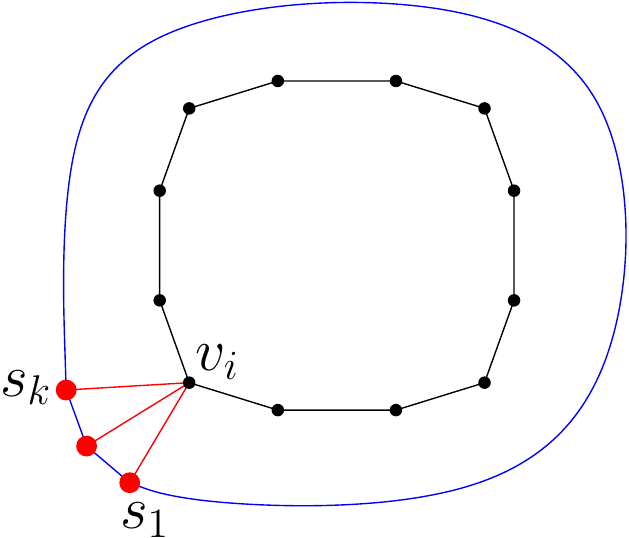}}
 \caption{(a-b) Case~2: Polygon $\Pi$ is degenerated and has two vertices. (c) Case~3: Polygon $\Pi$ is degenerated and has a single vertex, and $C_{\shell}'$ contains three or more stamens of $C$ adjacent to the same vertex.}
\end{figure}

\begin{figure}[tbh]
 \centering
 \subfigure[\label{fig:append_2a}]
 {\includegraphics[scale=0.5]{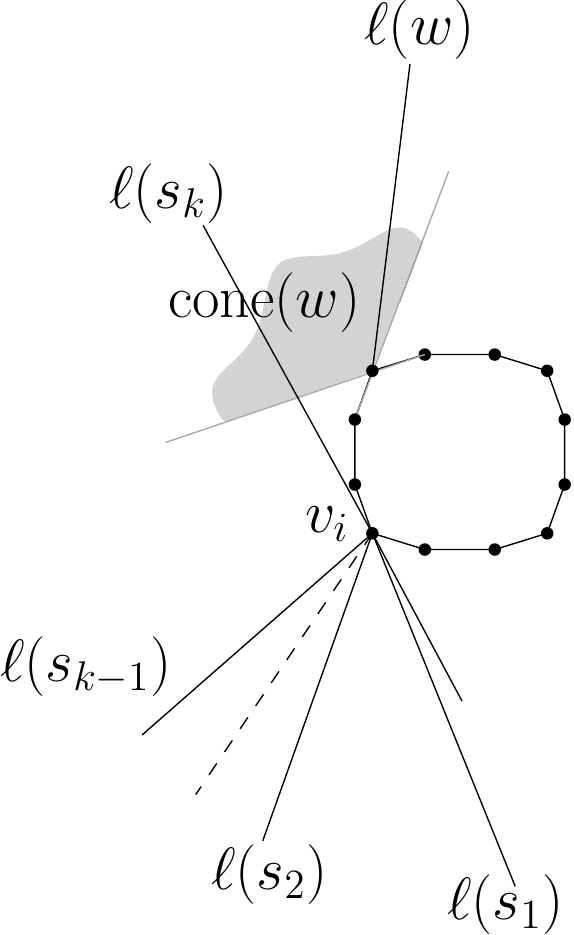}}
 \hspace{+1cm}
 \subfigure[\label{fig:append_3}]
 {\includegraphics[scale=0.5]{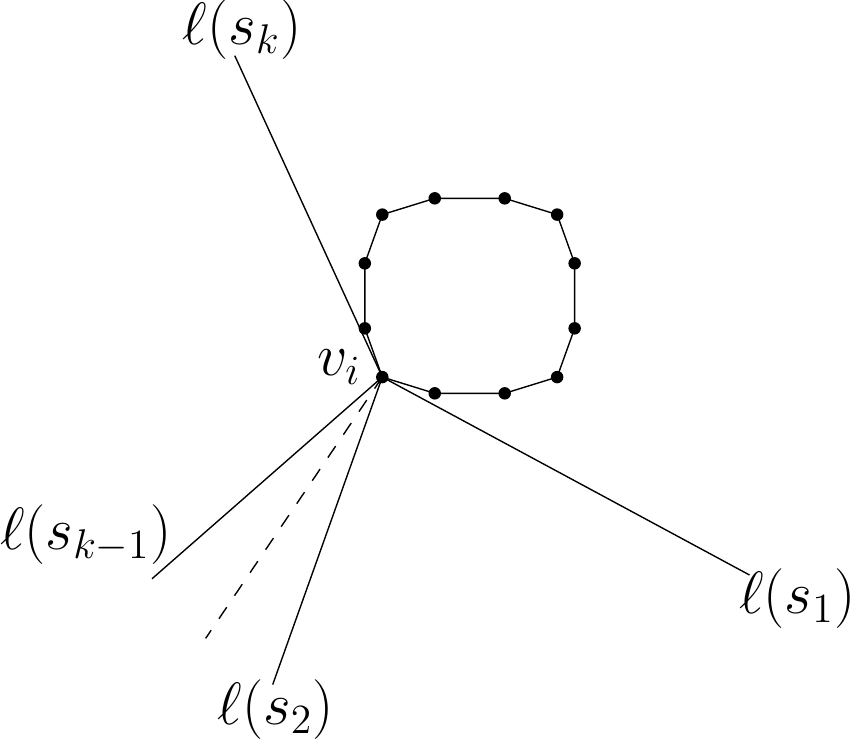}}
 \subfigure[\label{fig:3a}]
 {\includegraphics[scale=0.4]{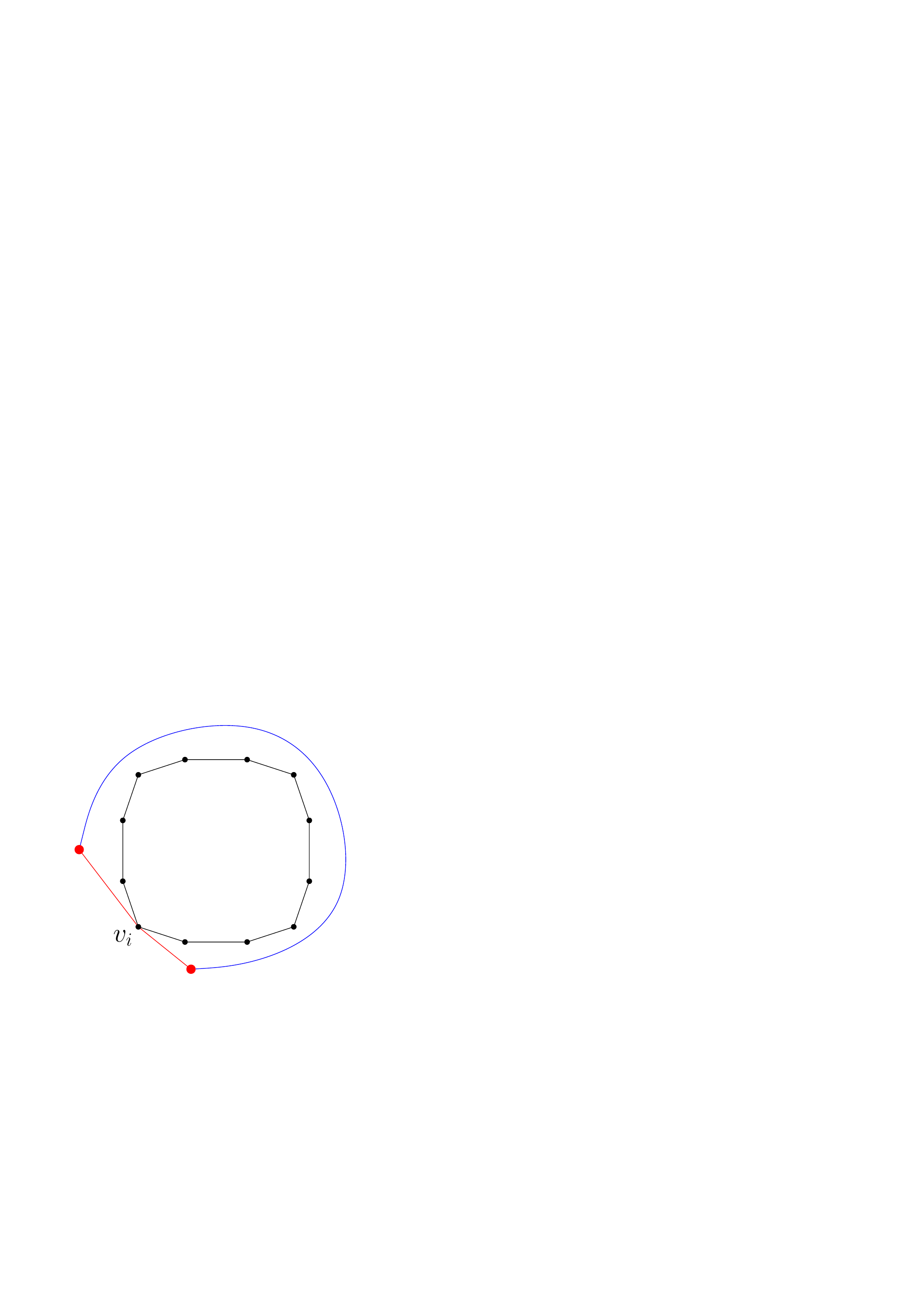}}
 \caption{(a) Construction in Case~2.a. (b) Construction in Case~3. (c) If $C_{\shell}'$  contains only two stamens of $C$, then a vertex of $C$, $v_i$, belongs to the outer cycle of $G$.}
\end{figure}

{\bf Case~2.a: $C_{\shell}'$  contains a petal $w$ of $C$ (Figure~\ref{fig:special_case_2a}).}  
Notice that there are at least two stamens of $C$ in  $C_{\shell}'$, since otherwise  $C_{\shell}'$ would contain only two vertices, and therefore there would be a double edge in $G$. Moreover, all stamens of $C$ that are in $C_{\shell}'$, are adjacent to the same vertex $v_i$ of $C$. Let $s_1,\dots,s_k$ be the stamens adjacent to $v_i$, appearing clockwise in $C_{\shell}'$. Let $\ell(s_k)$ be a line through $v_i$ that does not cross $\Gamma_C$ (Figure~\ref{fig:append_2a}). There exist a line $\ell(w)$ in $\cone(w)$ which does not cross $\ell(s_k)$.
Let $\ell(s_1)$ be a half-line through $v_i$,  lying in the half-plane defined by $\ell(s_k)$ not containing $\Gamma_C$, slightly rotated clockwise. Let $\ell(s_2),\dots,\ell(s_{k-1})$ be lines through $v_i$ that lie clockwise between $\ell(s_1)$ and $\ell(s_k)$. 
The angle between two consecutive half-lines $\ell(s_1),\dots,\ell(s_k), \ell(w)$ is less than $\pi$. Moreover, each of the half-lines has its infinite part in the corresponding cone. Moreover, for any position of $w$ (resp. $s_i$) on a point of $\ell(w) \cap \cone(w)$ (resp. $\ell(s_i) \cap \cone(s_i)$) , edges between $C$ and $w$ (resp. $s_i$) do not cross $\Gamma_C$. Thus, we can apply construction using circle $\kappa$ and parabola $\lambda$ identical to the case when polygon $\Pi$ is not degenerate.    

{\bf Case~2.b: $C_{\shell}'$  does not contain any petal of $C$ (Figure~\ref{fig:special_case_2b}).}  
In this case $C_{\shell}'$ contains at least three stamens of $C$, and all of them are adjacent to two vertices of $C$, say $v_i$ and $v_j$, otherwise $\Pi$ would contain more than two vertices. Without lost of generality assume that at least two stamens of $C$ are adjacent to $v_i$. Let them be $s_1,\dots,s_k$. Let $t_1,\dots,t_p$ be the stamens adjacent to $v_j$. The construction of half-lines $\ell(s_1),\dots, \ell(s_k),\ell(t_1),\dots,\ell(t_p)$ can be done along the same lines as in Case~2.a, where line $\ell(w)$ is substituted by a set of closely placed lines $\ell(t_1),\dots,\ell(t_p)$. Since the angle between two consecutive half-lines $\ell(s_1),\dots, \ell(s_k)$, $\ell(t_1),\dots,\ell(t_p)$  is less than $\pi$, we proceed again as in non-degenerate case.  

\item[\bf Case~3: Polygon $\Pi$ is degenerated and has a single vertex (Figure~\ref{fig:special_case_3b}).] 
We first notice that in this case $C_{\shell}'$ does not contain any petal of $C$. This is because $C_{\shell}'$ contains at least one stamen of $C$, adjacent to a vertex $v_i$ of $C$. Thus a petal together with $v_i$ would contribute two vertices to $\Pi$. Therefore, $C_{\shell}'$  contains only stamens of $C$. These stamens are adjacent to a single vertex of $C$, say $v_i$, since otherwise $\Pi$ would contain more than one vertex. If there are only two stamens adjacent to $v_i$ (see Figure~\ref{fig:3a}), then either there is a double edge in $G$, or $v_i$ belongs to the outer cycle of $G$. The latter can not be the case, since $C$ is a strictly internal cycle of $G$, as ensured in the beginning of the proof. So $C_{\shell}'$  contains three or more stamens $s_1,\dots,s_k$ of $C$, and they are all adjacent to $v_i$.  Figure~\ref{fig:append_3} illustrates the construction of lines $\ell(s_1),\dots, \ell(s_k)$ in this last case. Notice that the angles between two consecutive half-lines $\ell(s_1),\dots,\ell(s_k)$ can be made less than $\pi$, since $v_i$ is not flat. Thus we can again proceed as in non-degenerate case. This concludes the proof of the claim.  
\end{description}

Let $\Gamma_{\shell}'$ be the constructed drawing of $G_{\shell}'$. Recall that each petal or stamen $w$ of $C$, that does not belong to $C_{\shell}'$, lies in a face of $G_{\shell}'$, denoted by $\shell(w)$. Let $\Gamma_{\shell(w)}$ denote the polygon representing face $\shell(w)$ in $\Gamma_{\shell}'$. By construction,  $\cone(w) \cap \Gamma_{\shell(w)} \neq \emptyset$. We next explain how to extend the drawing of $G_{\shell}'$ to the drawing of $G_{\shell}$.
For each edge $(u,v)$ of $C_{\shell}'$, we add a convex curve, lying close enough to this edge inside $\Gamma_{\shell}'$. Let $\mu$ be the union of these curves for all edges of $C_{\shell}'$. We notice that we can place them so close to $C_{\shell}'$ that all the points of  $\{\apex(w) \mid w\in C\}$ are still in the interior of $\mu$. Thus $\mu$ is intersected by all the sets $\cone(w)$, for each $w \in C$. We place  each vertex $w$ of $C_{\shell} \setminus C_{\shell}'$ on $\mu \cap \cone(w)$ in the order they appear in  $C_{\shell}$. Since all edges induced by $C_{\shell}$ lie outside of $C_{\shell}$, and both end points of such an edge are placed on a single convex curve, they can be drawn straight without intersecting each other, or other edges of $G_{\shell}$. Thus, we have constructed a planar extension of $\Gamma_C$ to a drawing of $G_{\shell}$, call it $\Gamma_{\shell}$.

Recall the definitions of faces of $C$, faces of $C_{\shell}$ and petal faces from the beginning of this section. 
The faces of $C$ appear in $\Gamma_{\shell}$ as convex polygons. The faces of $C_{\shell}$ are triangles, and the petal faces of $G_{\shell}$ are star-shapes whose kernel is close to the corresponding petal. 
By Observation~\ref{obs:G_structure}, each vertex of $G \setminus G_{\shell}$ either lies in a face of $C$, or in a face that is a triangle, or in a petal face, or outside $C_{\shell}'$. Moreover a subgraph of $G$ inside a petal face is triconnected. Thus, by multiple applications of Theorem~\ref{theorem:HongNagamochi}, we can extend the drawing of $G_{\shell}$ to a straight-line planar drawing of the subgraph of $G$ inside $C_{\shell}'$.  

Finally, notice that in the constructed drawing of $G_{\shell}$ each petal of its outer cycle, i.e. $C_{\shell}'$, is realizable. This is by the choice of edge $e$. Moreover, by construction of $G_{\shell}$, $C_{\shell}'$ has no outer chords. 
Thus, we can apply Theorem~\ref{theor:one-sided}, to complete the drawing of $G$, lying outside $C_{\shell}'$.
\qed
\end{proof}

We conclude with the following general statement, that follows from Theorem~\ref{theorem:convex} and one of the known algorithms that constructs drawing of a planar graph with a prescribed outer face (e.g. \cite{dgk-pdhgg-11,t-hdg-63} or Theorem~\ref{theorem:HongNagamochi}). 
\begin{corollary}
Let $G$ be a plane graph and $H$ be a biconnected plane subgraph of $G$. Let $\Gamma_H$ be a straight-line convex drawing of $\Gamma_H$. $\Gamma_H$ is extendable to a planar straight-line drawing of $G$ if and only if the outer cycle of $H$ is outerchordless and each petal of the outer cycle of $H$ is realizable. 
\end{corollary}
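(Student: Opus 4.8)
The plan is to reduce the statement for a biconnected plane subgraph $H$ to Theorem~\ref{theorem:convex} applied to the outer cycle $C$ of $H$. First I would argue the easy direction: if $\Gamma_H$ extends to a planar straight-line drawing $\Gamma_G$ of $G$, then $C$ cannot have an outer chord (such a chord would have to be drawn as a segment outside the convex outer polygon of $H$, crossing it) and each petal of $C$ must be realizable (the restriction of $\Gamma_G$ to $V(C)\cup\{w\}$ witnesses realizability), exactly as in the necessary-condition discussion of Section~\ref{sec:definitions}. So the content is the converse.

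For the converse, assume the outer cycle $C$ of $H$ is outerchordless and every petal of $C$ is realizable. The key observation is that $\Gamma_H$ restricted to $C$ is a convex polygon $\Gamma_C$, and since $C$ is outerchordless in $G$ and all its petals are realizable, Theorem~\ref{theorem:convex} tells us $\Gamma_C$ extends to a planar straight-line drawing $\Gamma_G$ of $G$. The subtlety is that this extension need not agree with the given drawing $\Gamma_H$ \emph{inside} $C$. To fix this, I would proceed in two stages. Stage one: apply Theorem~\ref{theorem:convex} only to the part of $G$ that lies \emph{outside} $C$ — more precisely, let $G_{\mathrm{out}}$ be the plane subgraph of $G$ consisting of $C$ together with everything drawn outside $C$ in the embedding of $G$; since $C$ is still outerchordless and its petals still realizable in $G_{\mathrm{out}}$, Theorem~\ref{theorem:convex} extends $\Gamma_C$ to a planar straight-line drawing of $G_{\mathrm{out}}$. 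Stage two: handle the interior of $C$. Here $H$ is drawn convexly and biconnected with outer cycle $C$; its interior faces are convex polygons in $\Gamma_H$. For each inner face $f$ of $H$, the subgraph $G_f$ of $G$ drawn inside $f$ has its outer cycle equal to the (convex) boundary of $f$; after triangulating we may assume $G_f$ restricted to that boundary is outerchordless, hence a suitable known result (Theorem~\ref{theorem:HongNagamochi}, or Tutte~\cite{t-hdg-63}, or \cite{dgk-pdhgg-11}) extends the prescribed convex face boundary to a planar straight-line drawing of $G_f$. Doing this for every inner face of $H$ and gluing with the stage-one drawing of $G_{\mathrm{out}}$ along $C$ yields a planar straight-line drawing $\Gamma_G$ of all of $G$ that restricts to $\Gamma_H$.

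The main obstacle I expect is bookkeeping about \emph{which} drawing the prescribed faces of $H$ get: we must make sure that the stage-two filling uses exactly the face polygons given by $\Gamma_H$ (so that $\Gamma_G$ truly extends $\Gamma_H$, not merely some convex drawing of $H$), and that the stage-one extension outside $C$ is independent of what happens inside $C$ — which it is, because Theorem~\ref{theorem:convex} only ever places vertices outside $\Gamma_C$ and never touches the interior. A second small point is that the inner faces of $H$ need not be triangles and the pieces $G_f$ inside them need not be triconnected; this is resolved exactly as in Proposition~\ref{prop:triangulation}, by first triangulating each such piece without creating chords on the convex face boundary, applying the star-shaped/convex extension result, and then deleting the added edges. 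Finally the running-time claim follows since every ingredient — Proposition~\ref{prop:triangulation}, Theorem~\ref{theor:one-sided}, Theorem~\ref{theorem:convex}, and the face-filling of \cite{t-hdg-63,dgk-pdhgg-11} or Theorem~\ref{theorem:HongNagamochi} — runs in linear time, and the pieces are vertex-disjoint apart from shared boundaries.
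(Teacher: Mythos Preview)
Your proposal is correct and matches the paper's own (very terse) justification: the paper simply says the corollary ``follows from Theorem~\ref{theorem:convex} and one of the known algorithms that constructs a drawing of a planar graph with a prescribed outer face (e.g.\ \cite{dgk-pdhgg-11,t-hdg-63} or Theorem~\ref{theorem:HongNagamochi})''. Your two-stage split---Theorem~\ref{theorem:convex} for the exterior of $C$, then Tutte/Hong--Nagamochi (after a Proposition~\ref{prop:triangulation}-style triangulation) inside each convex inner face of $\Gamma_H$---is exactly the intended decomposition, and you correctly flag and handle the one subtlety the paper glosses over, namely that the extension must agree with all of $\Gamma_H$ and not just with $\Gamma_C$.
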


\section{Conclusions}

In this paper, we have studied the problem of extending a given convex
drawing of a cycle of a plane graph~$G$ to a planar straight-line
drawing of~$G$.  We characterized the cases when this is possible in
terms of two simple necessary conditions, which we proved to also be
sufficient.  We note that it is easy to test whether the necessary
conditions are satisfied in linear time.  It is readily seen that our
proof of existence of the extension is constructive and can be carried
out in linear time.  As an extension of our research it would be interesting to investigate whether more 
envolved necessary conditions are sufficient for more general shape of a cycle, for instance a star-shaped polygon.

\bibliographystyle{plain}
\bibliography{bibliography}
\end{document}